\newcommand{\mbf}{\mathbf}
\newcommand{\sbf}{\boldsymbol}
\def\n{\nonumber}                   
\def\supw{\sup\limits_{ \w\in \calH_J}}
\def\w{{\sbf w}}
\def\tr{\hbox{trace}}      
\def\calC{{\cal C}}        
\def\calH{{\cal H}}          
\def\0{{\bf 0}}
\def\A{{\bf A}}
\def\I{{\bf I}}
\def\e{{\sbf \epsilon}}
\def\y{{\bf y}}
\newcommand{\1}{\mathbf{1}}
\newcommand{\bmu}{\boldsymbol{\mu}}
\def\be{\begin{eqnarray}}
\def\ee{\end{eqnarray}}
\def\bSig{{\bf \Sigma}}
\newcommand{\E}[1]{\mathbb{E} \left[#1\right]}
\begin{document}

\title{Synthetic Regressing Control}

\author{\name Rong J.B. Zhu \email rongzhu@fudan.edu.cn \\
        \addr Institute of Science and Technology for Brain-Inspired Intelligence\\
        Fudan University \\
        Shanghai, China}

\maketitle

\begin{abstract}
Estimating weights in the synthetic control method, typically resulting in sparse weights where only a few control units have non-zero weights, involves an optimization procedure that selects and combines control units to closely match the treated unit.  
However, it is not uncommon for the linear combination of pre-treatment period outcomes for the control units, using nonnegative weights with the constraint that their sum equals one, to inadequately approximate the pre-treatment outcomes for the treated unit. 
To address the issue, this paper proposes a simple and effective method called \emph{Synthetic Regressing Control} (SRC).  The SRC method begins by performing the univariate linear regression to appropriately align the pre-treatment periods of the control units with the treated unit. Subsequently, a SRC estimator is obtained by synthesizing the regressed controls. To determine the weights in the synthesis procedure, we propose an approach that utilizes a criterion of an unbiased risk estimator. Theoretically, we show that the synthesis way is asymptotically optimal in the sense of achieving the minimum loss of the infeasible best possible synthetic estimator.  Extensive numerical experiments highlight the advantages of the SRC method.
\end{abstract}

\begin{keywords}
Synthetic Control, Treatment Effects, Panel Data, Unit Regression, Optimal Weights
\end{keywords}

\section{Introduction}
\label{sec:intro}

The synthetic control (SC) method is a popular approach of evaluating the effects of policy changes. It allows estimation of the impact of a treatment on a single unit in panel data settings with a modest number of control units and with many pre-treatment periods (\citealp{AG:03}, and \citealp{ADH:10}). The key idea under the SC method is to construct a weighted average of control units, known as a synthetic control, that matches the treated unit’s pre-treatment outcomes. The estimated impact is then calculated as the difference in post-treatment outcomes between the treated unit and the synthetic control.  See \cite{Abadie:21} for recent reviews.

The SC method utilizes constrained optimization to solve for weights, typically resulting in sparse weights where only a few control units have non-zero weights (\citealp{AL:21}). 
This procedure selects and combines control units to closely match the treated unit, enabling extrapolation. 
However, the synthetic control method may fail to extrapolate well. 
For instance, under factor models, if the factor loading of the treated unit lies outside the convex hull of the control units’ factor loadings,
then no weight vector on the simplex can balance the factor loadings.
Formally, we identify two key oversights of the method.
Firstly, interpolation error may occur when the pre-treatment outcomes for the control units do not closely match those for the treated unit. 
Secondly, the constraint that the sum of weights equals 1 overlooks the impact of noise. 
These two oversights lead to suboptimal weights.

In this article, we present a straightforward yet effective method named \emph{Synthetic Regressing Control} (SRC)  to address these oversights, 
retaining the strengths of the synthetic control strategy while reducing interpolation error. 
The SRC procedure begins by fitting each control unit to the treated unit using pre-treatment data. Through univariate linear regressions, SRC refines this alignment during the pre-treatment period, enhancing the predictive performance of control units. This improved fit substantially reduces interpolation error, which is a key advantage of the SRC approach.

The method then computes a weighted average of these pre-treatment fits across all control units to generate an SRC estimator.  
To determine the weights in the synthesis procedure, we propose an approach that utilizes a criterion of unbiased risk estimator. 
The criterion is a penalized loss minimization, equivalent to the loss minimization in the SC method but with the constraint that the sum of nonnegative weights is bounded by a constant, which accounts for model noise. 
SRC weights control units based on the goodness of fit of their pre-treatment data to the treated unit. Units with a good fit receive non-zero weights, signifying a substantial reliance on their data. Conversely, a zero weight indicates the corresponding unit is not heavily utilized by SRC, likely due to a poor pre-treatment fit.

Theoretically, we show that the synthesis approach is asymptotically optimal in the sense of achieving the minimum loss of the infeasible best possible synthetic estimator.  
We conduct both detailed simulation studies and an empirical study of the economic costs of conflicts in Basque, Spain, to shed light on when the SRC method performs well. 
We find evidence that SRC has lower mean-squared prediction error than alternatives in these studies.

The article is organized as follows. Section \ref{sec:related} briefly reviews related work. 
Section \ref{sec:SCM} introduces the SC method and identifies oversights associated with it. Section \ref{sec:SRC} presents the SRC method, which includes Section \ref{sec:Model} on the proposed working model framework, Section \ref{sec:rational} on the rationale for the unit regression step, Section \ref{sec:SRC-sub} on the synthesis method, 
and Section \ref{sec:weight} on an asymptotically optimal weighting method for the SRC method.  
Two extensions are considered: the case when units are more than time periods in Section \ref{sec:HD-setting}, and the incorporation of auxiliary covariates in Section \ref{sec:auxiliary}.  
Section \ref{sec:empirical} reports on extensive simulation studies as well as an application to the Basque dataset. Finally, Section \ref{sec:discussion} discusses the limitations of the method and some possible directions for further research.

\section{Related Work}
\label{sec:related}

This article is closely related to the studies that investigate the SC estimator when the pre-treatment fit is imperfect. 
\cite{FP:21} show the impact of the intercept term for the SC method and argues that a demeaned version of the SC method is already efficient. 
Our approach further investigates the influence of the regression coefficients from the regression of the treated unit on each control unit.

Another approach is to use an outcome model for reducing the imperfect fit. \cite{Powell:18} allows for extrapolation by constructing the SC unit based on the fitted values on unit-specific time periods. 
\citet{Ben:21} propose the augmented synthetic control method, which uses an outcome model to estimate bias resulting from imperfect pre-treatment fit and de-biases the original SC estimator.
The SRC method relates these estimators in the sense of addressing the issue of imperfect pre-treatment fit. However, it differs from them in intent. 
The concept of SRC aims to mitigate interpolation bias through unit regressing 
while concurrently synthesizing all regressed control units by optimizing an unbiased risk estimator criterion.

Our study also relates to literature that relaxes the restriction that the weights are nonnegative. 
\cite{DI:16} argue that negative weights would be beneficial in many settings and proposes adding an intercept into the SC problem. Similarly, \cite{ASS:18} propose a denoising algorithm that combines negative weights with a preprocessing step. 
In contrast, we argue that requiring the sum of weights to equal 1 may lead to suboptimal weights, and instead propose an optimization criterion that constrains the sum of nonnegative weights in a data-driven manner.

Several related articles have addressed the challenge of dealing with datasets that include too many control units, leading to the solution of the SC estimator being not unique (\citealp{ADH:15}).
\cite{RSK:17} and \cite{AL:21} adapt the original SC proposal to incorporate a penalty on the weights into the SC optimization problem. \cite{GM:16} make use of dimension reduction strategies to improve the estimator's performance. 
\cite{DI:16} suggest selecting the set of best controls by restricting the number of controls allowed to be different from zero using an $l_0$-penalty on the weights, 
and \cite{PX:22} further investigate it. 
While the SRC method does not employ a penalty to tackle the problem of an excessive number of units, the criterion used in a penalty-style fashion stems from constructing an unbiased estimator of the risk associated with the synthetic estimator. 
In situations where the number of units is large with respect to the number of time periods, a preprocessing step of screening units can extend the SRC method.

Our article is related to \cite{Kellogg:21}, which propose the matching and synthetic control (MASC) estimator by using a weighted average of the SC and matching estimators to balance interpolation and extrapolation bias. Our SRC method differs from MASC in several ways. First, SRC does not use the matching estimator and instead considers pre-treatment fit of each control unit that aligns the treated unit. 
Secondly, while the MASC estimator combines the SC estimator and the matching estimator through a weighted average, the SRC estimator synthesizes all regressed controls using a weighted average. 
Finally, the methods also differ in terms of how the weights are chosen. For the MASC estimator, the only weight is chosen by the cross-validation method. In contrast,  the SRC method involves solving multiple weights, and we employ the unbiased risk estimator criterion to determine these weights.

Our article is also connected to \cite{Athey:19} and \cite{VB:22}, which have explored the advantage of model averaging within the realm of synthetic control. While \cite{Athey:19} combine several regularized SC and matrix completion estimators developed in \cite{DI:16} and \cite{Athey:21},  \cite{VB:22} combine a large number of estimators from the machine learning literature. 
In contrast, the weighting scheme in our SRC method aims to ensemble all fitted units, mitigating the risk associated with the resulting synthetic estimator. 
We leverage the pre-treatment fitted estimators from each control unit as the proxied control units for synthesis.  
The averaging process in our approach serves to alleviate the inherent risk in the synthetic estimator. 
This distinguishes our method from previous approaches, which involve averaging various types of estimators or combining multiple estimators from the machine learning literature.

In addition to SC-style weighting strategies, there have been articles that directly use outcome modeling approaches. 
These include the panel data approach in \cite{Hsiao:12}, the generalized synthetic control method in \cite{Xu:17}, the matrix completion method in \cite{Athey:21}, and 
the synthetic difference-in-differences method in \cite{Ark:21}. 
In this article, we focus on the synthetic control framework, aiming to retain the accurate extrapolation property introduced in Section \ref{sec:intro}.

\section{Synthetic Control Method}
\label{sec:SCM}

\subsection{Overview}
We consider the canonical SC panel data setting with $j=1,\cdots,J+1$ units observed for $t=1,\cdots, T$ time periods. We restrict attention to the case where a single unit receives treatment, and follow the convention that the first one $j=1$ is treated and that the remaining $J$ ones are control units. Let $T_0$ be the number of pre-intervention periods, with $1\leq T_0<T$.
Let $\mathcal{T}_0$ and $\mathcal{T}_1$ be the set of time indices in the periods of pretreatment and post-treatment, respectively. 
We adopt the potential outcomes framework \citep{Neyman1923};
the potential outcomes for unit $j$ in period $t$ under control and treatment are $Y_{jt}(0)$ and $Y_{jt}(1)$, respectively. 
Thus, the observed outcomes are defined as follows: 
\begin{align*}
    Y_{1t} & = 
    \begin{cases}
    Y_{1t}(0) \text{ if } t\leq T_0,\\
    Y_{1t}(1) \text{ if } t> T_0;
    \end{cases}\\
    Y_{jt} & = Y_{jt}(0) \text{ for } j=2,\cdots, J+1, t=1,\cdots,T.
\end{align*}

We now give assumptions on the underlying Data Generation Process (DGP) for the treated potential outcomes, 
which are divided into a model component $\mu_{1t}(0)$ plus an additive noise term $\epsilon_{1t}$, as given by 
\begin{align}\label{model-po}
Y_{1t}(0)=\mu_{1t}(0)+\epsilon_{1t}, \ \ t\in \{1,\cdots, T\}, 
\end{align}
where $\epsilon_{1t}$ denotes idiosyncratic errors with $\E{\epsilon_{1t}}=0$ and $\E{\epsilon_{1t}^2}=\sigma^2$. 
Note that we assume homoskedasticity of the errors on $\epsilon_{1t}$ for convenience, as we allow for approximation errors on $\mu_{1t}(0)$ in our working model framework, as specified in \eqref{model:mu1-model} in Section \ref{sec:SRC}. 
In this paper, we do not make any assumptions about the control potential outcomes, as they can be used as predictors in the synthetic control method.

We define the effect of the intervention for the treated unit at time $t$ as 
$$\tau_{1t}= Y_{1t}(1) - \mu_{1t}(0), \ \ t\in \{1,\cdots, T\}. $$ 
Different from the treatment effect $Y_{1t}(1)-Y_{1t}(0)$ in previous studies \citep{Abadie:21}, 
here we focus on $Y_{1t}(1) - \mu_{1t}(0)$, which removes the noise component under model \eqref{model-po}. 
Thus, the aim of synthetic controls is to predict $\mu_{1t}(0)$.

Let $\mbf{y}_1$ be a $(T_0\times 1)$ vector of pre-intervention characteristics of the treated unit that we aim to match as closely as possible, and $\mbf{Y}_0$ be $(T_0\times J)$ matrix that contains the same variables for the control units.
A synthetic control is defined as a weighted average of the control units. 
Let $\w=(w_2,\cdots,w_{J+1})^{\top}$ be the
weight vector in the unit simplex in $\mathbb{R}^J$:
\begin{equation*}
  \calH_{\text{sc}}=\left\{w_j\in [0,1]:
    \sum_{j=2}^{J+1} w_j=1\right\}.
\end{equation*}

In the SC method, the weight vector $\w$ is chosen to solve the following optimization problem: 
\begin{align}\label{eq:sc-opt}
\tilde{\w}^{\text{sc}} & =\arg\min_{\w\in\calH_{\text{sc}}} \|\mbf{y}_1-\mbf{Y}_0\w\|.
\end{align}
Then, a synthetic control estimator is constructed by 
\begin{align*}
  \hat{\mu}_{1t}^{\text{sc}}(0) & =
\sum_{j=2}^{J+1}\tilde{w}_j^{\text{sc}}Y_{jt}, \ \ t\in \{1,\cdots, T\}, 
\end{align*}
and the treatment effect $\tau_{1t}$ is estimated by 
\begin{equation*}
  \hat{\tau}_{1t}^{\text{sc}}=Y_{1t}-\hat{\mu}_{1t}^{\text{sc}}(0) = Y_{1t}-\sum_{j=2}^{J+1}\tilde{w}_j^{\text{sc}}Y_{jt}, 
  \ \  t\in \{1,\cdots, T\}. 
\end{equation*}
The weights $\tilde{\w}^{\text{sc}}$ in the SC estimator are typically sparse,  meaning that they are only non-zero for a few control units (\citealp{AL:21}).  This feature is considered as an attractive property since it provides a way for experts to use their knowledge to evaluate the plausibility of the resulting estimates (\citealp{Abadie:21}).

In the SC method, the optimization problem \eqref{eq:sc-opt} involves pursuing of the procedure of synthesizing control units so that the synthetic control is close to the treated unit.
Furthermore, the unit simplex ensures that the weights in $\tilde{\w}^{\text{sc}}$ sum up to 1, representing  a weighted average of the selected control units.

\subsection{Oversights in the SC Method}
\label{sec:issues}

For the synthetic control method, \cite{ADH:10} demonstrate that under certain conditions, there exists a $\w \in \mathcal{H}_{\text{sc}}$ such that the combination of controls is unbiased. 
This means that the synthetic control estimator can be obtained by assuming the following working model: 
\begin{align}
    \mu_{1t}(0) &=\mu_1 + \sum\nolimits_{j=2}^{J+1}w_jY_{jt}+e_{t}, \ \ t\in \{1,\cdots, T\}, 
\label{model:sc-model}
\end{align}
where  $\w\in \mathcal{H}_{\text{sc}}$ denotes the “true” weights, and the approximation errors $(e_1,\cdots, e_T)$ is a sequence with independent, zero-mean variables.  
We include the intercept $\mu_1$ in the model by following the observation in \cite{FP:21}.

Denote $\bar{y}_j=T_0^{-1}\sum_{t=1}^{T_0}Y_{jt}$, $\bar{\mu}_{1}=T_0^{-1}\sum_{t=1}^{T_0}\mu_{1t}(0)$, and $\bar{\epsilon}_1=T_0^{-1}\sum_{t=1}^{T_0}\epsilon_{1t}$.  
A demeaned synthetic control estimator is given as 
$$\tilde{Y}_{1t}(\sbf{w})=\bar{y}_1+\sum\nolimits_{j=2}^{J+1}w_j(Y_{jt}-\bar{y}_j),$$ 
where weights $\sbf{w}$ are chosen to minimize $\sum_{t=1}^{T_0}[\tilde{Y}_{1t}(\sbf{w})-Y_{1t}]^2$ over the set $\mathcal{H}_{\text{sc}}$.

Now, let us decompose the error between $\tilde{Y}_{1t}(\sbf{w})$ and $\mu_{1t}(0)$ as follows: 
\begin{align}\label{eqn:error1}
    \tilde{Y}_{1t}(\sbf{w}) - \mu_{1t}(0) = & 
    \bar{y}_1+\sum\nolimits_{j=2}^{J+1}w_j(Y_{jt}-\bar{y}_j) - \mu_{1t}(0) \notag\\
    =  & \sum\nolimits_{j=2}^{J+1}w_j\left(Y_{jt}- \bar{y}_j - Y_{1t} + \bar{y}_1\right) + \left(\sum\nolimits_{j=2}^{J+1}w_j -1\right)(\mu_{1t}(0)-\bar{\mu}_1) \notag\\
     & +\left(\sum\nolimits_{j=2}^{J+1}w_j\right)\epsilon_{1t} +\left(1- \sum\nolimits_{j=2}^{J+1}w_j\right)\bar{\epsilon}_{1}.     
\end{align}
Unlike the decomposition of interpolation and extrapolation errors presented in \cite{Kellogg:21}, 
 \eqref{eqn:error1} breaks down the error into three components: 
\begin{compactitem}
\item The first term, $\sum\nolimits_{j=2}^{J+1}w_j\left(Y_{jt}- \bar{y}_j - Y_{1t} + \bar{y}_1\right)$, represents the weighted prediction error, which arises from from predicting $Y_{1t}$ using each control unit.
\item The second term, $\left(\sum\nolimits_{j=2}^{J+1}w_j -1\right)(\mu_{1t}(0)-\bar{\mu}_1)$, 
depends on  the gap between $\mu_{1t}$ and $\bar{\mu}_1$, where $\bar{\mu}_{1}$ refers to the mean during the pre-intervention periods for the treated unit. 
\item The third term, $\left(\sum\nolimits_{j=2}^{J+1}w_j\right)\epsilon_{1t} +\left(1- \sum\nolimits_{j=2}^{J+1}w_j\right)\bar{\epsilon}_{1}$, represents the impact of noise. 
\end{compactitem}

From the decomposition \eqref{eqn:error1}, we have two observations.
First, the first term relies on the prediction errors $Y_{jt}- \bar{y}_j - Y_{1t} + \bar{y}_1$ with weights $w_j$.  If they are large, the term may be uncontrollably large, and the weights may not effectively mitigate it. 

Second, the constraint $\sum_{j=2}^{J+1}w_j=1$ in the SC method is specifically designed to minimize the second term.  
When the constraint holds, the second term $(\sum_{j=2}^{J+1}w_j-1)(\mu_{1t}(0)-\bar{\mu}_1)$ disappears. 
However, this constraint overlooks the third term and does not guarantee that it remains small. 
Denote $s_w=\sum\nolimits_{j=2}^{J+1}w_j$.
A simple calculation shows 
$$\mathbb{E}[(s_w\epsilon_{1t} +(1- s_w)\bar{\epsilon}_{1})^2] = s_w^2\sigma^2+(1- s_w)^2T_0^{-1}\sigma^2.$$  
It is minimized at $s_w=1/(T_0+1)$, giving a minimum value of $\sigma^2/(T_0+1)$.
For comparison, it equals $\sigma^2/T_0$ when $s_w = 0$ and $\sigma^2$ when $s_w = 1$. 
Therefore, under the constraint $s_w=1$, the third error is not optimized. 

Consequently, combining the two observations above suggests that minimizing the loss in the SC method under the constraint $\sum_{j=2}^{J+1}w_j=1$ leads to suboptimal weights, as it overlooks the influence of the prediction errors and ignores the impact of the errors $\epsilon_{1t}$.

We conclude this section by noting that the decomposition in \eqref{eqn:error1} does not capture the advantage of imposing nonnegativity constraints on the weights, which is a key factor underlying the superior performance of the SC method relative to OLS. In other words, our analysis is not intended to suggest that the OLS method is optimal. Our approach to choosing weights preserves this key advantage of synthetic control -- the nonnegativity constraints on the weights -- and we introduce it in Section \ref{sec:weight}, where the optimality results explicitly allow for the possibility that the weight vector $\w$ outside the set $\mathcal{H}_{\text{sc}}$ may achieve lower risk. At the same time, constraining $\w$ to be nonnegative is often necessary in practice for  transparency and interpretability, as it ensures that the synthetic unit is a weighted sum of the donor units without any extrapolation in the negative direction -- an important property, particularly when the number of donor units $J$ is large. 
For further discussion on the role of the nonnegativity constraint, see \citet{AL:21}, \citet{Abadie:21}, and  \citet{Kellogg:21}.

\section{Synthetic Regressing Control}
\label{sec:SRC}
In this section, we first introduce a model framework that outlines our approach, and then employ two distinct techniques to address each of the aforementioned oversights separately. 
Specifically, we utilize unit regression to mitigate the interpolation error 
and implement a risk measure to minimize, determining the weight vector $\w$ to reduce the extrapolation error.

\subsection{A Model Framework}
\label{sec:Model}

We define 
\begin{equation}\label{unit-regression-coef}
\theta_j^* = \arg\min_{\theta_j \in \mathbb{R}}\left[\lim\limits_{T\rightarrow\infty}T^{-1}\sum\nolimits_{t=1}^T[Y_{1t}-\mu_1-\theta_j(Y_{jt}-\mu_j)]^2\right],
\end{equation}
where $\mu_j=\lim\limits_{T\rightarrow\infty}T^{-1}\sum\nolimits_{t=1}^TY_{jt}$ for $j=1,\cdots,J+1$.
\eqref{unit-regression-coef} represents a linear regression for each unit $j$, which we refer to as \emph{Unit Regressing}. 
The goal of unit regressing is to establish a correspondence between each control unit and the treated unit, aiming to minimize the distance between them. This process entails conducting a univariate regression analysis where the treated unit is regressed on the control unit. By doing so, we can estimate the counterfactual outcome for each control unit based on the pre-intervention regression fit.  Using the regression method is to mimic the behavior of the treated unit before the intervention as closely as possible.

From the unit regression, we construct the true regressed control values for each $j$ given by: 
\begin{equation}\label{unit-regression-fit}
Y_{jt}^*(0) = \theta_j^*(Y_{jt}-\mu_j),  \ \ t\in \{1,\cdots, T\}.
\end{equation}
Let
$\sbf{w} = (w_2,\cdots,w_{J+1})^{\top}$ be the
weight vector in the set 
\begin{equation*}
  \mathcal{H}_J=\left\{w_j\in [0,1]: j=2,\cdots,J+1\right\}. 
\end{equation*}
We propose our working model framework 
\begin{align}
\mu_{1t} &= \mu_1 + \sum\nolimits_{j=2}^{J+1}w_jY_{jt}^*(0)+e_{t},  \ \ t\in \{1,\cdots, T\}, 
\label{model:mu1-model}
\end{align}
where $\sbf{w}\in \mathcal{H}_J$ denotes the ``true" weights, and the approximation errors $\{e_1,\cdots, e_T\}$ form a sequence with independent, zero-mean variables.

\subsection{Rationale for the Unit Regression Step}
\label{sec:rational}

In the working model framework \eqref{model:mu1-model}, $\sum\nolimits_{j=2}^{J+1}w_jY_{jt}^*(0)$ synthesizes the regressed controls $Y_{jt}^*(0)=\theta_j^*(Y_{jt}-\mu_j)$ using weights $w_j$. 
Compared to the original synthetic control model \eqref{model:sc-model}, the working model \eqref{model:mu1-model} modifies the approach by combining linear transformations of the controls rather than the raw control variables.
Similar to the decomposition in \eqref{eqn:error1}, we have
\begin{align}\label{eqn:error2}
\sum\nolimits_{j=2}^{J+1}w_jY_{jt}^*(0)  - \mu_{1t}
= & \sum\nolimits_{j=2}^{J+1}w_j\left[Y_{jt}^*(0) - (Y_{1t}(0)-\mu_{1})\right] \notag\\
 & + \left(\sum\nolimits_{j=2}^{J+1}w_j - 1\right)(\mu_{1t} - \mu_{1})  +  \sum\nolimits_{j=2}^{J+1}w_j\epsilon_{1t}  - \mu_1. 
\end{align}
The effect of unit regression is reflected in the first term on the right-hand side of \eqref{eqn:error2}.
We next analyze its mean squared error:
$$\E{\left[\sum\nolimits_{j=2}^{J+1}w_j(Y_{jt}^*(0) - (Y_{1t}(0)-\mu_{1}))\right]^2}\leq \sum\nolimits_{j=2}^{J+1}w_j^2\sum\nolimits_{j=2}^{J+1}\E{\left(Y_{1t}(0)  - \mu_1 - Y_{jt}^*(0) \right)^2}.$$ 
It follows that $\E{(Y_{1t}(0) - \mu_1 - Y_{jt}^*(0))^2}$, for each control unit $j\in\{2,\cdots,J+1\}$, contributes to controlling the overall mean squared error.  
To compare our approach with the original synthetic control method, we examine how $\E{(Y_{1t}(0) - \mu_1 - Y_{jt}^*(0))^2}$ differs from $\E{[Y_{1t}(0) - \mu_1 - (Y_{jt}(0)-\mu_j)]^2}$.

We explore three DGP examples -- a linear factor model, a nonlinear factor model, and an autoregressive model -- to show the applicability of our working model framework \eqref{model:mu1-model}, and then substantiate the rationale behind unit regression.

\textbf{Example 1: Linear Factor Model.}
The potential outcomes of unit $j\in \{1,\cdots, J+1\}$ at time $t\in \{1,\cdots, T\}$ are given as follows: 
\begin{align}
Y_{jt}(0) = &\sbf{\lambda}_t^{\top}\mbf{f}_j+\epsilon_{jt}, 
\label{model-po0}
\end{align}
where  
$\sbf{\lambda}_t=(\lambda_{t1}, \cdots, \lambda_{tp})^{\top}$ is a $p\times 1$ vector of unobserved common stochastic factors, with $\E{\sbf{\lambda}_t}=\mbf{b}$,  
 $\E{(\sbf{\lambda}_t-\mbf{b})(\sbf{\lambda}_t-\mbf{b})^{\top}}=\mbf{I}_p$,
and $\mbf{f}_j$ is a $p\times 1$ vector of unknown, fixed factor loadings.  
In this example, $\mu_j=\mbf{b}^{\top}\mbf{f}_j$ for $j=1,\cdots,J+1$. 

Now, we demonstrate the benefit of applying unit regressing. 
A simple calculation shows that $\theta_j^*=\mbf{f}_1^{\top}\mbf{f}_j/(\mbf{f}_j^{\top}\mbf{f}_j+\sigma^2)$. 
Under the model \eqref{model-po0}, unit regression \eqref{unit-regression-fit} follows 
\begin{equation}\label{fit-mse}
\E{(Y_{1t}(0) - \mu_1 - Y_{jt}^*(0))^2} = \mbf{f}_1^{\top}\mbf{f}_1-2\theta_j^*\mbf{f}_1^{\top}\mbf{f}_j+\theta_j^{*2}\mbf{f}_j^{\top}\mbf{f}_j+(1+\theta_j^{*2})\sigma^2.   
\end{equation} 
While for the original control, we have 
\begin{equation}\label{o-mse}
\E{[Y_{1t}(0) - \mu_1 - (Y_{jt}(0)-\mu_j)]^2}= \mbf{f}_1^{\top}\mbf{f}_1-2\mbf{f}_1^{\top}\mbf{f}_j+\mbf{f}_j^{\top}\mbf{f}_j+2\sigma^2.
\end{equation}
Comparing \eqref{fit-mse} with \eqref{o-mse}, we have 
\begin{align*}
& \E{[Y_{1t}(0) - \mu_1 - (Y_{jt}(0)-\mu_j)]^2} - \E{(Y_{1t}(0) - \mu_1 - Y_{jt}^*(0))^2}\notag\\
 = & \left(\mbf{f}_1^{\top}\mbf{f}_j - (\mbf{f}_j^{\top}\mbf{f}_j+\sigma^2)\right)^2/(\mbf{f}_j^{\top}\mbf{f}_j+\sigma^2) + (1-\theta_j^{*2})\sigma^2.
\end{align*}
It demonstrates that in linear factor model, the unit regression step aims to bring each regressed control unit 
closer to the treated unit than its original control. 
This preliminary step suggests that synthesizing the regressed controls, rather than the original controls, 
may lead to enhanced performance. 
The degree of improvement is anticipated to increase as the gap between $\mbf{f}_1^{\top}\mbf{f}_j$ and  $\mbf{f}_j^{\top}\mbf{f}_j$ widens, 
without considering the impact of $\sigma^2$. 

We next show applicability of our working model \eqref{model:mu1-model}. 
Under the model \eqref{model-po0}, we obtain 
\begin{align*}
e_t = & \mu_{1t}(0) - \mu_1 - \sum\nolimits_{j=2}^{J+1}w_j\theta_j^*Y_{jt}(0) 
 = (\sbf{\lambda}_t - \mbf{b})^{\top}\left(\mbf{f}_1 - \sum\nolimits_{j=2}^{J+1}w_j\theta_j^*\mbf{f}_j\right) - \sum\nolimits_{j=2}^{J+1}w_j\theta_j^*\epsilon_{jt}\notag\\
 = & (\sbf{\lambda}_t - \mbf{b})^{\top}\left(\mbf{I} -\sum\nolimits_{j=2}^{J+1} w_j\mbf{f}_j\mbf{f}_j^{\top}/(\mbf{f}_j^{\top}\mbf{f}_j+\sigma^2)\right)\mbf{f}_1 - \sum\nolimits_{j=2}^{J+1}w_j\theta_j^*\epsilon_{jt}.
\end{align*}
Clearly, the working model framework \eqref{model:mu1-model} holds as $\mathbb{E}[e_t]=0$. 
The error includes two terms: one relies on $\mbf{I} -\sum\nolimits_{j=2}^{J+1} w_j\mbf{f}_j\mbf{f}_j^{\top}/(\mbf{f}_j^{\top}\mbf{f}_j+\sigma^2)$; 
the other relies on the weighted average on $\{\epsilon_{jt}\}_{j=2}^{J+1}$. 
The working model framework provides a basis for determining the optimal $w_j$ that minimizes the approximation errors in some sense, as discussed in the next Section \ref{sec:weight}.

\textbf{Example 2: Nonlinear Factor Model.}
The potential outcomes of unit $j\in \{1,\cdots, J+1\}$ at time $t\in \{1,\cdots, T\}$ are given as follows: 
\begin{align}\label{model-po-non}
Y_{1t}(0) = &\sbf{\lambda}_t^{\top}\mbf{f}_1+\epsilon_{1t} \text { and } 
Y_{jt}(0) = \sbf{\lambda}_{2t}^{\top}\mbf{f}_j+\epsilon_{jt} \text{ for } j=2, \cdots, J+1,
\end{align}
where we adapt the notations in \eqref{model-po0} but denote 
$\sbf{\lambda}_{2t}=(\lambda_{t1}^2, \cdots, \lambda_{tp}^2)^{\top}$, following 
$$\E{(\sbf{\lambda}_{2t}-\mbf{b}_{2})(\sbf{\lambda}_t-\mbf{b})^{\top}}=\text{diag}\{\mbf{0}_p\} \ \text{and } \E{(\sbf{\lambda}_{2t}-\mbf{b}_{2})(\sbf{\lambda}_{2t}-\mbf{b}_{2})^{\top}}=3\mbf{I}_p,$$
where $\mbf{b}_{2}=\E{\sbf{\lambda}_{2t}}$. 
In this example, $\mu_1=\mbf{b}^{\top}\mbf{f}_1$ and $\mu_j=\mbf{b}_{2}^{\top}\mbf{f}_j$ for $j=2,\cdots,J+1$. 
Clearly, in this nonlinear factor model, the control units do not contribute to explaining the treated unit, indicating that the synthetic control method is not applicable.  

We now show that unit regressing can capture this information. 
Under this model, unit regression \eqref{unit-regression-fit} follows 
$\theta_j^*=0$ and $Y_{jt}^*(0)=0$, indicating that no control should be used for synthetic controls. 
We have 
\begin{equation}\label{fit-mse-2}
\E{(Y_{1t}(0) - \mu_1 - Y_{jt}^*(0))^2} = \mbf{f}_1^{\top}\mbf{f}_1 + \sigma^2.  
\end{equation}
For the difference relative to the original control, we have 
$Y_{1t}(0) -\mu_1 - (Y_{jt}(0)-\mu_j) =   \sbf{\lambda}_t^{\top}\mbf{f}_1 -  \sbf{\lambda}_{2t}^{\top}\mbf{f}_j + \epsilon_{1t}-\epsilon_{jt}$, following 
\begin{align}\label{o-mse-2}
\E{[Y_{1t}(0) -\mu_1 - (Y_{jt}(0)-\mu_j)]^2}=& \mbf{f}_1^{\top}\mbf{f}_1+3\mbf{f}_j^{\top}\mbf{f}_j+2\sigma^2.
\end{align}
Comparing \eqref{fit-mse-2} with \eqref{o-mse-2}, we have 
$$ \E{[Y_{1t}(0) -\mu_1 - (Y_{jt}(0)-\mu_j)]^2} - \E{(Y_{1t}(0) - \mu_1 - Y_{jt}^*(0))^2} = 3\mbf{f}_j^{\top}\mbf{f}_j + \sigma^2.$$
It demonstrates that in this nonlinear factor model, the regressed controls obtained in the unit regression step can identify the control units unrelated to the treated unit, thereby ensuring robust performance and potentially leading to improvements.

We next show applicability of our working model \eqref{model:mu1-model}. 
Under the model \eqref{model-po-non}, we obtain 
\begin{align*}
e_t = \mu_{1t}(0) - \sum\nolimits_{j=2}^{J+1}w_j\theta_j^*Y_{jt}(0) 
 = &(\sbf{\lambda}_t - \mbf{b})^{\top}\mbf{f}_1. 
\end{align*} 
Clearly, the working model framework \eqref{model:mu1-model} holds as $\mathbb{E}[e_t]=0$. 
As expected, the approximation error does not depend on the controls, leading to the optimal $w_j$ being 0 for $j=2, \cdots, J+1$. 
In other words, the synthetic control method is not effective for this model.

\textbf{Example 3: Autoregressive Model.} 
To simplify notation, in this example we limit to one post-treatment observation $T=T_0+1$. 
For time period $t=T_0+1$, the potential outcomes $Y_{jt}(0)$ of unit $j\in \{1,\cdots, J+1\}$ are generated as 
\begin{align}\label{model-po-ar}
Y_{jt}(0)=&\mu_1+\sum\nolimits_{\ell=1}^{T_0}\beta_{\ell}(Y_{j(t-\ell)}(0)-\mu_j)+\epsilon_{jt},
\end{align}

Note that $\theta_j^*=\arg\min_{\theta\in\mathbb{R}}\mathbb{E}\left[Y_{1(T_0+1-\ell)}(0)-\mu_1-\theta (Y_{j(T_0+1-\ell)}(0)-\mu_j)\right]^2$. 
This leads to the decomposition on $Y_{1(t-\ell)}(0)$ as $Y_{1(t-\ell)}(0)-\mu_1=\theta_j^* [Y_{j(t-\ell)}(0) -\mu_j]+ \delta_{j(t-\ell)}$, 
where $\theta_j^* Y_{j(t-\ell)}(0)$ represents the projection of $Y_{1(t-\ell)}(0)$ onto $Y_{j(t-\ell)}(0)$ and  
$\delta_{j(t-\ell)}$ is the residual of the projection. 
It illustrates that in autoregressive model, the unit regression step possesses an attractive property:  
it can effectively reduce the fitting error.

We next show applicability of our working model \eqref{model:mu1-model}. 
Under the model \eqref{model-po-ar}, we obtain 
\begin{align*}
e_t & = \mu_{1t}(0)-\mu_1 - \sum\nolimits_{j=2}^{J+1}w_j\theta_j^*(Y_{jt}(0) - \mu_j) \notag\\
& = \sum\nolimits_{\ell=1}^{T_0}\beta_{\ell}\left[Y_{1(t-\ell)}(0) -\mu_1- \sum\nolimits_{j=2}^{J+1}w_j\theta_j^*(Y_{j(t-\ell)}(0)-\mu_j)\right]\notag\\
& = (1 - \sum\nolimits_{j=2}^{J+1}w_j)\sum\nolimits_{\ell=1}^{T_0}\beta_{\ell}(Y_{1(t-\ell)}(0) - \mu_1)
- \sum\nolimits_{j=2}^{J+1}w_j\sum\nolimits_{\ell=1}^{T_0}\beta_{\ell}\delta_{j(t-\ell)}
\end{align*}
Clearly, we have $\mathbb{E}[e_t]=0$, indicating that the working model framework \eqref{model:mu1-model} holds. 
The approximation error consists of a weighted average of two terms, $\sum\nolimits_{\ell=1}^{T_0}\beta_{\ell}(Y_{1(t-\ell)}(0) - \mu_1)$ and $-\sum\nolimits_{\ell=1}^{T_0}\beta_{\ell}\delta_{j(t-\ell)}$ with weights $1 - \sum\nolimits_{j=2}^{J+1}w_j$ and $\sum\nolimits_{j=2}^{J+1}w_j$, respectively. 
As discussed in Example 1, the working model framework serves as a basis for finding the optimal $w_j$ that minimizes the approximation errors in some sense, as explained in the next Section \ref{sec:weight}.

We conclude the examples by highlighting a limitation of the working model \eqref{model:mu1-model}, which may break down in non-stationary settings.
In such cases, the original synthetic control method, through the unit-sum constraint on the weights, may help offset non-stationarity by anchoring the post-treatment counterfactual to a convex combination of observed control units.
In the factor model examples considered here, time-varying effects are treated as random, so there is no need to rely on the unit-sum constraint on the weights, and the working model remains valid. 
When stationarity does not hold, the weights that are optimal during the pre-treatment period may no longer be optimal in the post-treatment period.   
See the Discussion section for details.

\subsection{Synthetic Regressing Control}
\label{sec:SRC-sub}

Denote $\mbf{y}_j=(Y_{j1},\cdots,Y_{jT_0})^{\top}$.  
We estimate $\theta_j^*$ in \eqref{unit-regression-coef} of unit regressing by solving the following optimization: 
\begin{align*}
\min_{\theta_j}\sum\nolimits_{t=1}^{T_0}\left[Y_{1t}-\bar{y}_1-\theta_j(Y_{jt}-\bar{y}_j)\right]^2.
\end{align*}
It follows the least squares estimators 
\begin{align*}
\hat{\theta}_j = & \frac{(\mbf{y}_j-\bar{y}_j\mbf{1})^{\top}(\mbf{y}_1-\bar{y}_1\mbf{1})}{\|\mbf{y}_j-\bar{y}_j\mbf{1}\|^2}
\end{align*}
for each $j$. 
Consequently, the regressed controls of unit $j$ are given by 
\begin{align}\label{eqn:tilde-control}
\tilde{Y}_{jt}(0) = &\hat{\theta}_j(Y_{jt}(0)-\bar{y}_j), \ \ t\in\{1,\cdots, T\}. 
\end{align}

We utilize the synthetic method by assigning a weight vector for the regressed controls.   
Given that $\mu_1$ is estimating by $\bar{y}_1$, and substituting in $\hat{\theta}_j$,  
a \emph{synthetic regressing control} estimator is formulated as 
\begin{equation}\label{predict-syn}
  \hat{\mbf{y}}_1(\sbf{w}) =
\bar{y}_1\mbf{1}+\sum_{j=2}^{J+1}w_j\hat{\theta}_j(\mbf{y}_j-\bar{y}_j\mbf{1}). 
\end{equation}
We refer to it as \emph{``Synthetic Regressing Control"} (SRC). 
In contrast to SC, the weights in SRC are distinct from the regression coefficients. 
The weights indicate the extent to which the regressed controls are considered in the synthesis process, 
while the regression coefficients capture the relationship between the treated and control units.
Similar to SC, SRC employs the synthesis method to control extrapolation error by assigning more weight to the regressed controls that demonstrate higher similarity.

\subsection{Determination of optimal $\sbf{w}$}
\label{sec:weight}
As discussed in Section \ref{sec:issues}, the SC method lacks optimal weights. 
Here, we provide a method for obtaining optimal weights in terms of minimizing an unbiased estimator of the risk. 
We further demonstrate the weights we derive have asymptotic optimality.

Denote $\tilde{\mbf{y}}_j=(\tilde{Y}_{j1}(0), \cdots, \tilde{Y}_{jT_0}(0))^{\top}$. 
From \eqref{eqn:tilde-control}, 
We rewrite $\tilde{\mbf{y}}_j$ as 
$\tilde{\mbf{y}}_j=\mbf{H}_j\mbf{y}_1$, where $\mbf{H}_j=\mbf{Q}\mbf{y}_j\left(\mbf{y}_j^{\top}\mbf{Q}\mbf{y}_j\right)^{-1}\mbf{y}_j^{\top}\mbf{Q}$ 
and $\mbf{Q}=\mbf{I}-T_0^{-1}\mbf{1}\mbf{1}^{\top}$,  
implying that \eqref{predict-syn} is rewritten as 
$$\hat{\mbf{y}}_1(\sbf{w})=\bar{y}_1\mbf{1}+\sum\nolimits_{j=2}^{J+1}w_j\mbf{H}_j\mbf{y}_1.$$ 
Denote the loss of $\hat{\mbf{y}}_1(\w)$ relative to $\bmu_1$ as $$L(\w)=\|\hat{\mbf{y}}_1(\w)-\bmu_1\|^2$$ and define the risk as $R(\sbf{w})=\E{L(\w) | \mbf{Y}_0}$, where 
the expectation is taken over the only source of randomness, $\mbf{\epsilon}$. 
In this context, $\mbf{Y}_0=(\mbf{y}_2, \cdots, \mbf{y}_{J+1})$ is treated as a design matrix, as discussed in Section \ref{sec:Model}. 
Denote $\mbf{H}(\sbf{w})=\sum\nolimits_{j=2}^{J+1}w_j\mbf{H}_j$ with $\ell_{jt}$ as the $t$-th diagonal element of $\mbf{H}_j$.
We have that 
\begin{align}\label{eq:risk}
\E{\|\hat{\mbf{y}}_1(\sbf{w})-\mbf{y}_1\|^2   | \mbf{Y}_0}-R(\sbf{w})
& = \E{\|\hat{\mbf{y}}_1(\sbf{w})-\mbf{y}_1\|^2-\|\hat{\mbf{y}}_1(\sbf{w})-\sbf{\mu}_1\|^2   | \mbf{Y}_0}\notag\\
& = \E{\mbf{\epsilon}^{\top}\mbf{\epsilon}-2\mbf{\epsilon}^{\top}(\bar{\mu}_1\mbf{1}+\bar{\epsilon}\mbf{1}+\mbf{H}(\sbf{w})\sbf{\mu}_1-\sbf{\mu}_1+\mbf{H}(\sbf{w})\mbf{\epsilon})  | \mbf{Y}_0}\notag\\
& =(1-2T_0^{-1})\sigma^2T_0-2\sigma^2\sum\nolimits_{j=2}^{J+1}w_j\sum\nolimits_{t=1}^{T_0}\ell_{jt}.  
\end{align}
Noting that $\sum\nolimits_{t=1}^{T_0}\ell_{jt}=1$, 
 \eqref{eq:risk} demonstrates that the expression 
$$\|\hat{\mbf{y}}_1(\sbf{w})-\mbf{y}_1\|^2+2\sigma^2\sum\nolimits_{j=2}^{J+1}w_j-(1-2T_0^{-1})\sigma^2T_0$$ 
serves as an unbiased estimator of $R(\sbf{w})$. 
This motivates the utilization of the following criterion to obtain $\w$: 
\begin{equation*}
\calC_0(\sbf{w})=\|\hat{\mbf{y}}_1(\sbf{w})-\mbf{y}_1\|^2+2\sigma^2 \sum\nolimits_{j=2}^{J+1}w_j. 
\end{equation*}
This means that $2\sigma^2 \sum\nolimits_{j=2}^{J+1}w_j$ is an estimate of the gap between the risk and the loss, 
ignoring a constant. 
Minimizing the criterion $\calC_0(\sbf{w})$ is equivalent to solving the following optimization problem:
\begin{equation*}
\min_{\sbf{w}}\|\hat{\mbf{y}}_1(\sbf{w})-\mbf{y}_1\|^2 \ \ \text{subject to }  \sum\nolimits_{j=2}^{J+1}w_j\leq c
\end{equation*}
for a constant $c$ that depends on $\sigma^2$.  
Thus, in our approach, the constraint $\sum\nolimits_{j=2}^{J+1}w_j=1$ is replaced by $\sum\nolimits_{j=2}^{J+1}w_j\leq c$.

Estimating $\sigma^2$ by 
\begin{align}\label{eqn:sigmahat}
\hat{\sigma}^2=&\|\mbf{Q}\mbf{y}_1-\mbf{Q}\mbf{Y}_0[\text{diag}(\mbf{Y}_0^{\top}\mbf{Q}\mbf{Y}_0)]^{-1}\mbf{Y}_0^{\top}\mbf{Q}\mbf{y}_1\|^2, 
\end{align}
where $\text{diag}(\mbf{Y}_0^{\top}\mbf{Q}\mbf{Y}_0)$ denotes the diagonal matrix formed by the diagonal elements of $\mbf{Y}_0^{\top}\mbf{Q}\mbf{Y}_0$, we propose a Mallows' $C_p$ criterion 
\begin{equation}\label{M-criterion}
\calC(\sbf{w})=\|\hat{\mbf{y}}_1(\sbf{w})-\mbf{y}_1\|^2+2 \hat{\sigma}^2\sum\nolimits_{j=2}^{J+1}w_j. 
\end{equation}
From (\ref{M-criterion}), the weight vector is obtained as
\begin{equation*}
  \hat{\sbf{w}}= \mathop{\arg\min}\limits_{\sbf{w} \in\mathcal{H}_J}{\calC}(\sbf{w}).
\end{equation*}
With $\hat{\w}$ into \eqref{predict-syn}, we obtain the SRC estimator
\begin{equation}\label{eq:sSRC-estimator}
 \hat{Y}_{1t}(0)=\bar{y}_1+\sum\nolimits_{j=2}^{J+1}\hat{w}_j\hat{\theta}_j(Y_{jt}-\bar{y}_j), \ \ t\in \{1,\cdots, T\}. 
\end{equation}
We summarize the procedure of obtaining the SRC estimator as Algorithm \ref{alg:main}.

In \eqref{eq:sSRC-estimator}, the SRC estimator is represented  as a linear weighting estimator of the outcomes of control units $Y_{jt}$, similar to the SC estimator. 
The weights $\hat{w}_j\hat{\theta}_j$ can be understood as the adjusted weights within the synthetic control method. 
It is essential to recognize that these weights comprise two components: $\hat{w}_j$ and $\hat{\theta}_j$. 
This formulation allows for negative weights by the unit regression coefficients $\hat{\theta}_j$ and facilitates extrapolation beyond the convex hull of the control units. 
In unit regressing alone, the estimator $\hat{\theta}_j$ permits arbitrarily weights even in the absence of correlation between the treated unit and the control unit $j$. 
In contrast, by imposing the constraint of the convex hull of the regressed control units on $\hat{\sbf{w}}$, the sum of weights is penalized. 
This constraint effectively manages the extent of extrapolation error.

\begin{algorithm}
\begin{algorithmic}
\item (1) Obtain  $\hat{\mbf{y}}_1(\sbf{w})$ for $j\in\{2,\cdots,J+1\}$  from \eqref{predict-syn}.
\item (2) Obtain  $\hat{\sigma}^2$ from \eqref{eqn:sigmahat}. 
\item (3) Solve $\hat{\w}$ by 
\begin{equation*}
  \hat{\w}= \mathop{\arg\min}\nolimits_{\w \in\mathcal{H}_J}\left\{\|\y_1-\hat{\mbf{y}}_1(\sbf{w})\|^2+2\hat{\sigma}^2 \w^{\top}\1\right\}.
\end{equation*}
\item (4) Obtain $\hat{Y}_{1t}(0)$ for $t\in\{1,\cdots,T\}$  from \eqref{eq:sSRC-estimator}.
\end{algorithmic}
\caption{The SRC estimator}
\label{alg:main}
\end{algorithm}

Denote $\bmu_1^{(o)}=(\mu_{1(T_0+1)}, \cdots, \mu_{1T})^{\top}$ and let $\hat{\mbf{y}}_1^{(o)}(\w)$ denote the corresponding predictions. 
We also denote $\mbf{y}_j^{(o)}=(Y_{j(T_0+1)}(0), \cdots, Y_{jT}(0))^{\top}$ for $j=1,\cdots,J+1$, and let $\mbf{Y}_0^{(o)}=(\mbf{y}_2^{(o)}, \cdots, \mbf{y}_{J+1}^{(o)})$. 
Define the loss of predicting post-intervention periods as 
$$L^{(o)}(\w)=\|\hat{\mbf{y}}_1^{(o)}(\w)-\bmu_1^{(o)}\|^2.$$ 
In the following theorem, we establish a property regarding the weight vector $\hat{\w}$, solved based on the unbiased risk estimator criterion: $L^{(o)}(\hat{\w})$ is asymptotical attach the minimum loss of the infeasible best possible synthetic estimator, $\inf_{\w\in \mathcal{H}_J}L^{(o)}(\w)$.
This form of asymptotic optimality is a well-established statistical property in model selection \citep{Li:87} and model averaging \citep{Hansen:07,Wan:10,Zhu:23}. Here, we investigate it in the context of synthetic controls.
For simplifying the notation, we assume, without loss of generality, that $\mbf{y}_j, \forall j=2,\cdots,J+1$ and $\mbf{\mu}_1$ are centered, i.e., 
$\mbf{1}^{\top}\mbf{y}_j=0$ and $\mbf{1}^{\top}\mbf{\mu}_1=0$. 
\begin{theorem}
\label{th:opt-out}
Under Model \eqref{model-po}, assume that (1) $\max_t\E{\epsilon_t^4}\leq c_1<\infty$ for some constant $c_1$,
(2) $\|\bmu_1^{(o)}\|^2/(T-T_0)\leq c_2<\infty$ for some constant $c_2$,  
(3) $J^{-1}(T-T_0)^{-1}T_0\|\bmu_1^{(o)}-\mbf{Y}_0^{(o)}[\text{diag}(\mbf{Y}_0^{\top}\mbf{Y}_0)]^{-1}\mbf{Y}_0^{\top}\bmu_1\|^2\rightarrow_p \infty$ as $T_0\rightarrow \infty$, 
and (4) the model \eqref{model:mu1-model} holds, 
then as $T_0\to\infty $, 
\begin{align*}
\frac{L^{(o)}(\hat{\w})}{\inf_{\w\in \mathcal{H}_J}L^{(o)}(\w)}\rightarrow_p 1. 
\end{align*}
\end{theorem}
The technical proof of the following theorem is given in Appendix A. 
Theorem \ref{th:opt-out} demonstrates the asymptotic optimality of the proposed method for out-of-sample predictions, i.e., predicting post-intervention periods, under the model \eqref{model:mu1-model}.  
The synthetic estimator with the weights $\hat{\w}$ asymptotically achieves the minimum loss of the infeasible best possible synthetic estimator. 
Additionally, we also present the asymptotic optimality for in-sample prediction presented in Theorem \ref{th:opt} of the Appendix A.

The conditions of $\max_t\E{\epsilon_t^4}\leq c_1$ and $\|\bmu_1^{(o)}\|^2/(T-T_0)\leq c_2$ 
are quite mild since they only require
bounded fourth moments of errors and that $\|\bmu_1^{(o)}\|^2=O(T-T_0)$, respectively. 
The crucial condition, 
$$J^{-1}(T-T_0)^{-1}T_0\|\bmu_1^{(o)}-\mbf{Y}_0^{(o)}[\text{diag}(\mbf{Y}_0^{\top}\mbf{Y}_0)]^{-1}\mbf{Y}_0^{\top}\bmu_1\|^2\rightarrow_p \infty$$ as $T_0\rightarrow \infty$, means that the squared post-intervention prediction error is large relative to the number of control units. 
This condition described is typically considered to be mild in the context of the synthetic control problem when $T_0$ is large relative to $J$. This is because achieving a perfect approximation through univariate regression on a simple control unit is rarely attainable. 
When $J\geq T_0$ or $J\approx T_0$, one practical way is to screen the units to reduce the number of units. Further details are discussed in Section \ref{sec:HD-setting}. 
Additionally, when incorporating auxiliary covariates with dimension $d$,  $T_0$ would be extended to $T_0+d$. Further insights on this aspect are provided in Section \ref{sec:auxiliary}.

The key condition of this theorem is that the working model \eqref{model:mu1-model} holds.  
The working model ensures the effectiveness of predicting post-intervention periods. 
While we have demonstrated that the working model holds for the three DGPs in Section \ref{sec:rational}, further verification is needed to assess whether more DGPs violate the working model framework.

Theorem \ref{th:opt-out} shows that the oracle loss relative to the true $\bmu_1^{(o)}$ for the proposed estimator converges to the optimal loss achieved by the oracle weights, which are assumed to be known. However, it does not provide an unbiased estimate of $\bmu_1^{(o)}$, and thus does not yield an unbiased estimator for the treatment effect. Deriving a debiased estimator under model \eqref{model:mu1-model} using the projection theory \citep{ZZ:14,VB:14,Li:20} to enable valid statistical inference is a promising direction for future work, though it lies beyond the scope of this paper.

\section{Extensions}
\label{sec:extension}
In this section, we consider \text{two} elaborations to the basic setup. First, we extend it to cases where units are more than time periods. Second, we extend it by incorporating auxiliary covariates. In addition, a placebo-based permutation test \citep{ADH:10} is presented in Section \ref{sec:test} of the Appendix.

\subsection{Screening Units When They are Too Many}
\label{sec:HD-setting}

We extend the application of the SRC method to cases where $J\geq T_0$ or $J\approx T_0$. 
To accomplish this, we propose a practical procedure that involves screening the units using the sure independent ranking and screening (SIRS) method (\citealp{zhu2011model}) to reduce the number of units. 
In high-dimensional statistics, Theorems 2 and 3 in \cite{zhu2011model} indicate that SIRS can reduce the dimensionality without losing any active variables with a probability approaching one. 
We prefer SIRS over the original sure independence screening proposed by \cite{Fan:08} because it allows us to assume that no linear candidate model is correct.

For applying SIRS into the control units, we assume that $\mu_{1t}$ depends only on some of the control units, called as active units, in this study. 
SIRS screens the units based on the magnitude of the following statistics instead of the marginal correlation,
\begin{align}\label{mag:eqn}
 \tilde{\eta}_j=&\frac{1}{T_0}\sum_{t=1}^{T_0}\left\{\frac{1}{T_0}\sum_{\ell=1}^{T_0}Y_{jt}I_{(-\infty,\ Y_{1t})}(Y_{1\ell})\right\}^2 \quad\text{for}\quad j=2,\cdots,J+1.
 \end{align}
Derivation and interpretation of this statistics can be found in \cite{zhu2011model}.
We use the statistics $\tilde{\eta}_j$ for screening units, then obtain a set that involves any activate units. 
Following \citet{zhu2011model}, we set the number of active units to the nearest integer of $T_0/\log (T_0/2)$. We summarize it as Algorithm \ref{alg:HD} below.
 \begin{algorithm}
 \begin{algorithmic}
 \item  Step 1: Screen units by the SIRS method to get the subset $\mbf{Y}_s$ from $\mbf{Y}_0$.
 \item  \quad Step 1.1:  Calculate the magnitudes $\tilde{\eta}_j$ for $j=2,\cdots, J+1$ according to \eqref{mag:eqn};
 \item  \quad Step 1.2:  Select $k=\lfloor T_0/\log (T_0/2) \rceil $ units among the $J$ control units with the largest $\tilde{\eta}_j$ values;
  \item  \quad Step 1.3:  Construct the subset $\mathbf{Y}_s$ from $\mathbf{Y}_0$ using the selected units.
 \item  Step 2: Perform Algorithm \ref{alg:main} on $\mbf{y}_1$ and $\mbf{Y}_s$. 
 \end{algorithmic}
 \caption{The SRC estimator when control units are too many}
 \label{alg:HD}
 \end{algorithm}
 
Regarding unit screening, in practice we recommend performing screening when $J\geq 4T_0/5$ to ensure sufficient degrees of freedom for computing the Mallows’ $C_p$ criterion. 
Once the screened units are reduced, we perform the SRC method on these units to obtain the estimator.
It is worth noting that the first step of screening units differs from that of \citet{Xu:17} and \citet{ASS:24}, where the initial step involves estimating the latent factors and factor loadings using control units during the pre-treatment period. We also provide an empirical comparison with the generalized synthetic control estimator \citep{Xu:17} in Section \ref{sec:empirical}.

\subsection{Incorporating auxiliary covariates}
\label{sec:auxiliary}

We have focused on matching pre-treatment values of the outcome variable. In practice, we typically observe a set of auxiliary covariates as well. 
For example, in the study of Proposition 99, \cite{ADH:10} consider the following covariates:
average retail price of cigarettes, per capita state personal income, per capita beer consumption, and the percentage of the population age 15–24.

It is natural to incorporate auxiliary covariates in applying the SRC method. 
For unit $j$, denote $\mbf{x}_j$ as a $(p\times 1)$ vector of observed covariates that are not affected by the intervention. 
Let $\mbf{X}=(\mbf{x}_1,\mbf{x}_2,\cdots,\mbf{x}_{J+1})$. 
Analogous to the SC method (\citealp{ADH:10}), 
We define the augmented $(T_d\times 1)$, where $T_d=T_0+p$, vector of pre-intervention characteristics for the treated unit
$\mbf{z}_1=(\mbf{y}_1^{\top},\mbf{x}_1)^{\top}\in\mathbb{R}^{T_d}$. 
Similarly, $\mbf{Z}_0$ is a $(T_d\times J_0)$ matrix that contains the same variables for the control units. 
Because the auxiliary covariates are included in the $T_d$ predictors,  
a positive definite and diagonal matrix $\mbf{V} \in \mathbb{R}^{T_d\times T_d}$ is required to reflect relative importance of each predictor.  
A common way for selecting $\mbf{V}$ is to minimize the mean squared prediction error of the outcome variable for the pre-intervention periods (\citealp{AG:03}, and \citealp{ADH:10}). 
Once $\mbf{V}$ is obtained, we denote $\tilde{\mbf{z}}_1=\mbf{V}^{1/2}\mbf{z}_1$ and $\tilde{\mbf{Z}}_0=\mbf{V}^{1/2}\mbf{Z}_0$.

We apply Algorithm \ref{alg:main} on $\tilde{\mbf{z}}_1$ and $\tilde{\mbf{Z}}_0$ to obtain $\hat{w}_j^{(\mbf{z})}$ and $\hat{\theta}_j^{(\mbf{z})}$, and then obtain the SRC estimator 
\begin{equation}\label{eq:sSRC-estimator-z}
  \hat{Y}_{1t}(0)=T_0^{-1}\sum\limits_{t=1}^{T_0}Y_{1t}+\sum_{j=2}^{J+1}\hat{w}_j^{(\mbf{z})}\hat{\theta}_j^{(\mbf{z})}(Y_{jt}-T_0^{-1}\sum\limits_{t=1}^{T_0}Y_{jt}), \ \ t\in \{1,\cdots, T\}.
\end{equation}
We summarize it as Algorithm \ref{alg:Z} below.
 \begin{algorithm}
 \begin{algorithmic}
  \item Step 1: Combine $\mbf{y}_1$ with $\mbf{x}_1$ to obtain $\mbf{z}_1$, and similarly combine $\mbf{Y}_0$ with $\mbf{X}_0$ to obtain $\mbf{Z}_0$. 
    \item Step 2: Obtain $\mbf{V}$ and denote $\tilde{\mbf{z}}_1=\mbf{V}^{1/2}\mbf{z}_1$ and $\tilde{\mbf{Z}}_0=\mbf{V}^{1/2}\mbf{Z}_0$. 
 \item  Step 3: Perform Algorithm \ref{alg:main} on $\tilde{\mbf{z}}_1$ and $\tilde{\mbf{Z}}_0$, and then obtain $\hat{w}_j^{(\mbf{z})}$ and $\hat{\theta}_j^{(\mbf{z})}$. 
 \item  Step 4: Obtain the SRC estimator according to \eqref{eq:sSRC-estimator-z}. 
 \end{algorithmic}
 \caption{The SRC estimator when auxiliary covariates are incorporated}
 \label{alg:Z}
 \end{algorithm}

\section{Empirical Studies}
\label{sec:empirical}

In this section, 
we conduct extensive Monte Carlo simulation studies to assess the performance of various methods, finding where and how the SRC estimator performs compared to existing estimators, and subsequently we perform an empirical analysis on a real dataset to examine the behavior of the SRC method.

\subsection{Simulation Studies}
Now we investigate the finite sample performance of alternative estimators in the simulation experiments using a factor model. 
We compare several representative synthetic estimators, including: 
(a) the original SC (SC) in \cite{ADH:10}, (b) the de-meaned SC (dSC) in \cite{FP:21}, (c) the augmented SC (ASC) in \cite{Ben:21}, (d) the generalized synthetic control (GSC) in \cite{Xu:17}, (e) the matching and SC (MASC) in \cite{Kellogg:21}, (f) OLS in \cite{Hsiao:12},  and (h) the constrained lasso (lasso) in \cite{CWZ:21}.

In this experiment, all units are generated according to the factor model as follows
$$Y_{jt}(0)=\alpha_t+\lambda_tf_j+\epsilon_{jt},$$
where unobserved factors $\lambda_t\sim \mathcal{N}(0,1)$. 
To assess the impact of fixed time effects $\alpha_t$ and the influence of the gap between $f_1f_j$ and $f_j^2$ ($j\neq 1$), 
we consider three model settings: 
(1) \emph{F1}: $\alpha_t=0$ for all $t$; $f_j=1$ for $j=1,\cdots, 7$ and $f_j=0$ for $j=8,\cdots, J+1$. 
(2) \emph{F2}: $\alpha_t=0$ for all $t$; $f_1=3$ and $f_j=1$ for $j=2,\cdots, J+1$. 
(3) \emph{F3}: $\alpha_t\sim \mathcal{N}(0,1)$ for all $t$; $f_1=3$ and $f_j=1$ for $j=2,\cdots, J+1$. 
For \emph{F1}, there are no fixed time effects; and all nonzero factor loadings are set to be
ones, so both treated and control units with nonzero loadings are drawn from a common distribution. 
In \emph{F2}, there are no fixed time effects as \emph{F1}; however, the treated and control units are drawn from heterogeneous distributions: the treated unit has a loading of 3, while control units have loadings equal to 1.  
for \emph{F3}, the factors loadings are the same as \emph{F2}, but fixed time effects $\alpha_t\sim \mathcal{N}(0,1)$ are added. 
We set $T=50$ with $T_0=40$ and $J=20$. 
The errors $\epsilon_{jt}\sim \mathcal{N}(0,\sigma^2)$, where we use three values of $\sigma$ values, 1, 0.5, and 0.1, to investigate the impact of $\sigma$. 
We also consider heteroscedastic noise following an autoregressive process, specified as $\epsilon_{jt} = 0.6\epsilon_{j(t-1)} + \epsilon_{jt}^w$, where $\epsilon_{jt}^w$ denotes independent white noise drawn from $\mathcal{N}(0, \sigma^2)$. The corresponding results, reported in Figure~\ref{simulation-ARnoise} of the Appendix, are consistent with those obtained under the homogeneous noise setting.

To evaluate each estimator, we compute the mean squared prediction error (MSPE), which is defined as 
$\text{MSPE}=(T-T_0)^{-1}\sum\nolimits_{t=T_0+1}^T\|\hat{Y}_{1,t}(0)-Y_{1,t}(0)\|^2$, by calculating the average loss across 500 simulations. The results are reported in Table \ref{simulation1}. 
For the homoscedastic setting \emph{F1}, all methods perform similarly.
In the heteroscedastic setting \emph{F2}, SC, dSC, ASC, GSC, MASC, and Lasso perform poorly, whereas SRC and OLS perform well, with SRC outperforming OLS. 
In the heteroscedastic setting, the factor loading of the treated unit lies outside the convex hull of the factor loadings for the control units, so no set of weights on the simplex can balance the factor loadings. While synthetic control estimators are expected to struggle in such settings, OLS tends to perform well because it permits unrestricted extrapolation. 
In contrast, our SRC approach adjusts for the imbalance in factor loadings between the treated and control units through unit regression, while also controlling extrapolation error by preserving the key structure of synthetic controls. As a result, SRC is expected to outperform OLS, and the results in Table \ref{simulation1} confirm this expectation.
When fixed time effects are present in the heteroscedastic setting \emph{F3}, OLS performs poorly, as it fails to account for the fixed time effects. In contrast, SRC achieves the best performance.  
These observations regarding SRC are consistent with the findings in Section \ref{sec:rational}, suggesting that its improvement is expected to increase as the disparity between $f_1f_j$ and $f_j^2$ ($j\neq 1$), becomes more pronounced, while still preserving the advantages of the sparse synthesizing approach characteristic of SC-type methods. 

\begin{table}[!ht]
\caption{Post-intervention MSPE of alternative estimators.}
\begin{center}
\begin{tabular}{c | c c c c c c c c c}
\hline
$(\text{model},\sigma)$ & \scriptsize SC & \scriptsize dSC & \scriptsize ASC & \scriptsize GSC  & \scriptsize MASC & \scriptsize OLS & \scriptsize lasso & \scriptsize SRC\\
\hline
$(\emph{F1},1)$ &1.426 &1.282&1.355&\bf1.221&1.345&2.521&1.320&1.446\\
$(\emph{F1},0.5)$ &0.571&\bf0.303&0.384&0.303&0.372&0.534&0.311&0.348\\
$(\emph{F1},0.1)$ & 0.223&0.014&0.016&\bf0.013&0.017&0.026&0.014&0.017\\
\hline
$(\emph{F2},1)$ &5.546&5.509&5.543&5.786&5.716&2.811&5.539&\bf1.932\\
$(\emph{F2},0.5)$ &4.731&4.708&4.757&4.738&4.031&0.656&4.664&\bf0.453\\
$(\emph{F2},0.1)$ & 4.403&4.293&4.397&4.421&4.811&0.033&4.295&\bf0.021\\
\hline
$(\emph{F3},1)$ &3.629&3.591&3.839&3.616&3.901&5.185&3.751&\bf2.682\\
$(\emph{F3},0.5)$ & 3.460&3.563&3.744&4.148&3.598&5.159&3.517&\bf2.319\\
$(\emph{F3},0.1)$ & 3.317&3.270&3.417&3.283&3.374&3.636&3.281&\bf1.747\\
\hline
\end{tabular}
\end{center}
\label{simulation1}
\end{table}

\begin{table}[!ht]
\caption{Post-intervention MSPE of alternative estimators when the number of units is large.}
\begin{center}
\begin{tabular}{ c | c c c c c c c c c}
\hline
 $(\text{model},\sigma)$ & \scriptsize SC & \scriptsize dSC & \scriptsize ASC & \scriptsize GSC &  \scriptsize MASC & \scriptsize OLS & \scriptsize lasso & \scriptsize SMC\\
\hline
$(\emph{F1},1)$&1.377&1.453&\bf1.373&1.400&2.286& 7.570&1.494&1.454\\
$(\emph{F1},0.5)$&0.327&0.341&0.347&0.314&0.261& 1.211&0.357&\bf0.314\\
$(\emph{F1},0.1)$&0.012&0.012&0.011&\bf0.010&0.011& 0.054&0.011&0.012\\
\hline
$(\emph{F2},1)$&5.059&5.116&5.168&5.398&5.254& 5.155&5.047&\bf1.813\\
$(\emph{F2},0.5)$&4.963&5.052&5.068&4.939&2.300& 1.439&5.104&\bf0.745\\
$(\emph{F2},0.1)$&3.501&3.512&3.723&3.859&1.936& 0.062&3.481&\bf0.016\\
\hline
$(\emph{F3},1)$&6.226&6.583&7.837&7.168&6.502&16.818&6.635&\bf4.221\\
$(\emph{F3},0.5)$&5.008&5.138&5.102&5.241&8.844&14.849&5.139&\bf3.032\\
$(\emph{F3},0.1)$&3.036&3.469&3.374&3.926&3.126&10.476&3.423&\bf2.048\\
\hline
\end{tabular}
\end{center}
\label{simulation2}
\end{table}

\textbf{Noise-to-Signal Impact.}
Comparing the results across various $\sigma$ values, we find that the above observations hold true. 
Notably, we observe that the MSPE values of both the SRC and OLS estimators approach zero as $\sigma$ decreases from 1 to 0.1 when no fixed time effects are present. In contrast, this convergence is not observed for the other methods under heteroscedastic settings. The superior performance of OLS in the case of \emph{F2} is consistent with the findings reported in \citet{Hsiao:12}.

\textbf{Effectiveness of Unit Screening.}
To assess the performance of unit screening when the number of units is large, we consider the case $T = 50, J = 50$, where $T_0 <J$. For this setting, we apply the SRC estimator using Algorithm \ref{alg:HD} with SIRS preprocessing on the screening units, as described in Section \ref{sec:HD-setting}. The results, reported in Table \ref{simulation2}, show that SRC performs well for \emph{F1} and achieves superior performance for \emph{F2} and \emph{F3}, highlighting the effectiveness of the SIRS preprocessing step within the extended SRC framework.

\textbf{Nonlinear Factor Model Case.}
We also assess the performance of a nonlinear factor model, where 
all units are generated according to the factor model 
$Y_{1t}(0)=\alpha_t+\lambda_tf_1+\epsilon_{1t}$ and $Y_{jt}(0)=\alpha_t^2+\lambda_t^2f_j+\epsilon_{jt}$ for $j=2,\cdots, J+1$. 
The factors and loadings are set as in \emph{F3}.  
The results are reported in Table \ref{simulation-nonlinear}, 
showing that the observation is consistent with Example 2 in Section \ref{sec:rational}. 
The SC, dSC, ASC, GSC, and MASC methods performs poorly, and 
OLS also works poorly due to the lack of restrictions on the coefficients of the predictors. 
The lasso method performs better than these methods as it can identify unrelated predictors. 
Among all methods, our approach works best, demonstrating that 
the unit regression step enhances the predictive power of synthetic controls. 

\begin{table}[ht]
\caption{Post-intervention MSPE of alternative estimators under the nonlinear factor model.}
\begin{center}
\begin{tabular}{c | c c c c c c c c}
\hline
$\sigma$ & \scriptsize SC & \scriptsize dSC & \scriptsize ASC & \scriptsize GSC  & \scriptsize MASC & \scriptsize OLS & \scriptsize lasso & \scriptsize SRC\\
\hline
$1$ &16.98&13.95&17.77&14.19&21.70&23.11&11.22&\bf11.16\\
$0.5$ &20.31&16.43&21.37&16.34&16.50&21.60&\bf11.46&11.66\\
$0.1$ &15.07&12.35&15.58&12.76&12.38&19.02&10.05& \bf9.830\\
\hline
\end{tabular}
\end{center}
\label{simulation-nonlinear}
\end{table}

\textbf{Case of a Single-Unit Synthetic Control.}
Finally, we consider two cases beyond the standard setting: the treated unit lies outside the convex hull of the donor pool, and the synthetic control coincides with a single donor unit, by using the model $Y_{jt}(0)=\lambda_tf_j+\epsilon_{jt}$.  
Here $\lambda_t\sim \mathcal{N}(0,1)$, and the errors $\epsilon_{jt}\sim \mathcal{N}(0,\sigma^2)$. 
We examine a set of factor loadings $f_j$: $f_1=2$, $f_2=2$, $f_j=1$ for $j=3,\cdots, J+1$; 
We set $T=50$ with $T_0=40$ and $J=20$. 
The case corresponds to the case where the synthetic control coincides with a single donor unit.  The results are reported in Table \ref{simulation-hull}. SRC outperforms the other methods in this case.

\begin{table}[!ht]
\caption{An experiment under the case of a synthetic control with a single donor}
\begin{center}
\begin{tabular}{c | c c c c c c c c}
\hline
$(\sigma)$ & \scriptsize SC & \scriptsize dSC & \scriptsize ASC  &  \scriptsize GSC & \scriptsize MASC & \scriptsize OLS & \scriptsize lasso & \scriptsize SRC\\  
\hline
$(1)$&1.742&1.723&1.774&1.817&2.270&2.376&2.134&\bf1.655\\
$(0.5)$&0.394&0.420&0.400&0.458&0.536&0.560&0.786&\bf0.352\\
$(0.1)$&0.020&0.021&0.020&0.021&0.022&0.026&0.529&\bf0.015\\
\hline
\end{tabular}
\end{center}
\label{simulation-hull}
\end{table}

\subsection{The Basque dataset}
We study the effect of terrorism on per capita GDP in Basque, Spain.  
The Basque dataset is from \cite{AG:03}. It consists of per capita GDP of 17 regions in Spain from 1955 to 1997, and 12 other covariates of each region over the same time interval, representing education, investment, sectional shares, and population density in each region.
We incorporate auxiliary covariates which include averages for the 13 characteristics from 1960 to 1969, and scale each covariate so that it has equal variance of outcomes. 
In this study, the treated unit is the Basque Country, and the treatment is the onset of separatist terrorism, which begins in 1970.

\textbf{Placebo Analysis.}
Similar to \cite{AG:03}, we conduct a placebo study to compare alternative estimators in the real data. 
We perform placebo analyses on each region, excluding Basque, as the placebo region. We calculate the mean squared prediction error (MSPE) for each region by taking the differences between its actual and fitted outcome paths in each of the post-period years (1970-1997), squaring these differences and then averaging them among these years. The results of our analysis are presented in Table \ref{realdata-placebo}, which shows that, on average, SRC tends to have the lowest MSPE. 
In addition, we include the pre-period fit of these estimator in Table \ref{realdata-fit} of Appendix B to further demonstrate their performance. 
Interestingly, we observe that SRC does not exhibit the best pre-period fit on average (it is the second best), while GSC demonstrates the best pre-period fit on average. 
This observation suggests that SRC is less prone to over-fitting compared to ASC.

\begin{table}[!ht]
\caption{Performance (MSPE) of alternative estimators in the placebo study.}
\begin{center}
\begin{tabular}{c | c c c c c c c c}
\hline
region & \scriptsize SC & \scriptsize dSC & \scriptsize ASC  & \scriptsize GSC & \scriptsize MASC  & \scriptsize OLS & \scriptsize lasso &  \scriptsize SRC\\
\hline
Andalucia & 0.41 & 0.15 & 0.17  & \bf0.02 & 0.32 &  5.18 &  0.13 & 0.16\\
Aragon & 0.03 & 0.12 & 0.06  & 0.06 & 0.04 &  0.03 & 0.02 &  \bf0.01\\
Asturias & 0.71 & 0.56 & 3.40 & \bf0.06 & 0.73 & 0.30 & 0.44 &  0.77\\
Baleares & 2.12 & 3.68 & 1.24 & 0.57 & 2.12 & 2.51 & 4.73 &  \bf0.56\\
Canarias & 0.07 & 0.10 & 0.35 & 0.45 & 0.07 & 0.96 & \bf0.02 &  0.29\\
Cantabria & 0.37 & 0.65 & 0.90 & 0.81 & 0.34 & 1.87 &  0.56 &  \bf0.11\\
Leon & \bf0.01 & 0.12 & 0.08 & 0.06 & 0.01 & 0.15 &  0.13 &  0.05\\
Mancha & 0.07 & \bf0.02 & 0.04 & 0.74 & 0.04 &  0.49 &  0.39 &  0.34\\
Cataluna & 0.44 & \bf0.03 & 0.14 & 0.08 & 0.44 & 0.73 &  1.33 & 0.25\\ 
Valenciana & 0.15 & 0.14 & 0.09 & 1.79 & 0.04 & 0.08 &  \bf0.03 &  0.29\\
Extremadura & 0.74 & \bf0.06 & 0.17 & 0.21 & 0.74 & 0.99 & 0.63 &  0.08\\
Galicia & \bf0.01 & 0.02 & 0.04 & 0.19 & \bf0.01 & 0.07 & 0.01 &  0.04\\
Madrid & \bf0.11 & 0.38 & 3.75 & 3.22 & \bf0.11 & 0.48 &  0.16 & 0.28\\ 
Murcia & 0.21 & 0.27 & 0.07 & 0.18 & 0.20 &  1.18 &  \bf0.03 &  0.19 \\
Navarra & 0.04 & 0.04 & 0.03 & 0.06 & 0.05 &  0.04 &  0.12 &  \bf0.03\\
Rioja & 0.04 & \bf0.03 & 0.08 & 0.04 & 0.19 & 0.09 &  0.30 & 0.04\\
\hline
average & 0.35 & 0.40 & 0.66 & 0.53  & 0.34 & 0.95 & 0.56 & \bf0.22\\
\hline
\end{tabular}
\end{center}
\label{realdata-placebo}
\end{table}


\begin{figure}[!ht]
  \centering
  \includegraphics[width=4.5in]{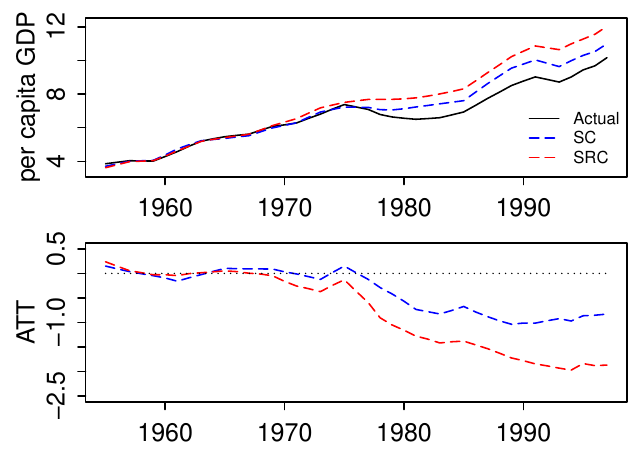}
  \caption{Study of the Basque Country. Upper: Actual and counterfactual per capita GDP of the Basque Country. 
  Bottom: The difference of actual and counterfactual values, representing the average treatment effect on the treated (ATT).}
  \label{fig:synthetic}
\end{figure}

\begin{table}[!ht]
\caption{Estimates of weights for alternative estimators in the Basque study.
The comprehensive weights $\theta_jw_j$ and the coefficients are reported for SRC and OLS, respectively.}
\begin{center}
\begin{tabular}{c | c c c c c c c}
\hline
region & \scriptsize SC & \scriptsize dSC & \scriptsize MASC & \scriptsize OLS & \scriptsize lasso &  \scriptsize SRC\\ 
\hline
Andalucia & 0 & 0 & 0 & 0.217 & -0.112 & 0\\
Aragon & 0 & 0 & 0 & -3.059 & 0 &  0\\
Asturias & 0 & 0 & 0 & 1.460 & 0 &  0.001\\
Baleares & 0 & 0.582 & 0 & -0.365 & 0.365 &  0\\
Canarias & 0 & 0 & 0 & -0.245 & 0 &  0\\ 
Cantabria & 0 & 0 & 0 & 0.048 & 0.048 & 0.311\\
Leon & 0 & 0 & 0 & 0.080 & 0 & 0\\
Mancha & 0 & 0 & 0 & 0.861 & 0 & 0\\
Cataluna & 0.851 & 0 & 0.851 & -0.174 &  0.174 &  0.028\\
Valenciana & 0 & 0 & 0 & 1.766 & 0.065 &  0\\ 
Extremadura & 0 & 0 & 0 & -0.401 & 0 &  0\\
Galicia & 0 & 0 & 0 & -0.396 & 0 & 0\\
Madrid & 0.149 & 0.418 & 0.149 & 0.533 &  0 &  0.276\\ 
Murcia & 0 & 0 & 0 & -0.547 & 0 & 0\\
Navarra & 0 & 0 & 0 & 1.302 & 0.234 & 0\\
Rioja & 0 & 0 & 0 & 0.673 & 0 &  0.587\\ 
\hline
Intercept & - & -0.335 & - & -2.580 & 1.097 &  0.047\\ 
\hline
\end{tabular}
\end{center}
\label{basque-syn}
\end{table}

\textbf{Synthetic Basque.}
We estimate the effect of exposure to terrorism on GDP per capita in Basque, Spain. 
We present the GDP per capita for both Basque and its synthetic control, generated using the SC and SRC methods, in Figure \ref{fig:synthetic}. We also show the difference between the actual values and the synthetic values. 
Due to saving space and for ease of comparison, we just report the comparison between SRC and the original SC method, ignoring other estimators.  
To gain a better understanding of the SRC method, we inspect the comprehensive weight $\theta_jw_j$ assigned to each control unit $j$ and compare it with the weights obtained from other methods. The results are reported in Table \ref{basque-syn}. The SC, dSC, and MASC methods assign non-zero weights to just two control units. The lasso method assigns non-zero weights to six control units. Comparing to them, SRC assigns non-zero weights to five units. 
It is important to note that OLS does not have zero weight since it is not subject to constraints. These findings suggest that, in contrast to other constrained methods, the greater flexibility of SRC may contribute to enhanced predictive power.

\section{Discussion}
\label{sec:discussion}
This paper makes three contributions: (1) We propose a simple and effective method, Synthetic Regressing Control, by synthesizing the regressed controls. 
(2) We determine the weights by minimizing the unbiased risk estimate criterion. 
We demonstrate that this method is asymptotically optimal, achieving the minimum loss of the infeasible best possible synthetic estimator.
(3) We expand the method to cases where control units is more than time periods, and incorporating auxiliary covariates.

There are several potential directions for future work. 
First, our working model framework may not hold for non-stationary data, as discussed in Section \ref{sec:rational}, because the unit-sum constraint is removed. When the unit-sum constraint is imposed on the weights, the $C_p$ criterion for searching $\w$ in Section \ref{sec:weight} degenerates to minimizing the pre-intervention loss.  A more efficient approach based on an alternative criterion is therefore needed. 
Second, we focus on the simple linear regression to reduce the issue of imperfect fit. Consequently, the synthesis method relies on the simple linear regression. However, the SRC method  may be applicable for more complex data structures, such as discrete, count, or hierarchical outcomes, and nonlinear relationships. Therefore, extending the method to broader regression models is an interesting direction. 
Third, for settings with multiple treated units, we can fit SRC separately for each treated unit, as in \cite{Abadie:21}. However, this approach brings a loss of efficiency due to the correlation of treated units. Therefore, efficiently extending the method to multiple treated units is a worthy direction. 
Fourth, if a set of auxiliary covariates is available, we pool the auxiliary covariates and the outcomes together to conduct the SRC method. However, this approach may bring extra risk when the linear approximation relation in the covariates is different from that in the outcomes. Therefore, 
exploring ways to incorporate auxiliary covariates into the SRC method while minimizing such risks is another worthy problem for future research.
Fifth, in this paper we focus on estimation and have not yet rigorously conducted inference. While the placebo permutation test \citep{ADH:10} can be applied to the SRC method, it does not account for the uncertainty arising from both the unit regression step and the estimation of the variance of the idiosyncratic shocks. Therefore, it is important to develop statistical inference for the SRC method analytically, explicitly incorporating the uncertainty from the unit regression step. 
Finally, extending it to more complicated situations, such as staggered adoption where units take up the treatment at different times (\citealp{Ben:22}), is another challenging direction.

\section*{Competing interests}
No competing interest is declared.

\section*{Acknowledgements}

I sincerely thank the Associate Editor, Dr. Youjin Lee, and the two anonymous reviewers for their insightful and constructive comments, which substantially improved the manuscript.  I also thank Kaspar W\"uthrich for valuable discussions on the preliminary version. 

\vskip 0.2in
\bibliography{SCM}

\newpage
\appendix
\renewcommand{\theequation}{A.\arabic{equation}}
\renewcommand{\thesection}{A}
\setcounter{equation}{0}


\section*{Appendix A: Proofs of Results}
\label{sec:proof2}

\subsection{Asymptotic optimality for in-sample predictions}

\begin{theorem}
\label{th:opt}
Under Model \eqref{model-po}, 
assume that (1) $\max_t\E{\epsilon_t^4}\leq c_1<\infty$ for some constant $c_1$,
(2) $\|\bmu_1\|^2/T_0\leq c_2<\infty$ for some constant $c_2$, 
and (3) $J^{-1}\|\bmu_1-\mbf{Y}_0[\text{diag}(\mbf{Y}_0^{\top}\mbf{Y}_0)]^{-1}\mbf{Y}_0^{\top}\bmu_1\|^2\rightarrow_p \infty$ as $T_0\rightarrow \infty$, 
then as $T_0\to \infty$, 
\begin{align*}
\frac{L(\hat{\w})}{\inf_{\w\in \mathcal{H}_J}L(\w)}\rightarrow_p 1.
\end{align*}
\end{theorem}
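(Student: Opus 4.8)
The statement is an asymptotic optimality result of exactly the type established for Mallows model averaging, so I would follow the template of \citet{Hansen:07} (see also Li (1987), Andrews (1991), and Wan, Zhang and Zou (2010)), adapted to the averaging class $\mathcal{H}_J=[0,1]^J$ and to heteroskedastic errors. Throughout I would condition on the control panel $\mbf{Y}_0$, so that each $\mbf{H}_j$, and hence $\mbf{H}(\w)=\sum_{j=2}^{J+1}w_j\mbf{H}_j$, is a fixed matrix and the only randomness enters through $\mbf{y}_1=\bmu_1+\e_1$, with $\bSig=\mathbb{E}[\e_1\e_1^\top]=\mathrm{diag}(\sigma_1^2,\dots,\sigma_{T_0}^2)$. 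The risk then reads $R(\w)=\|(\mbf{H}(\w)-\I)\bmu_1\|^2+\tr(\mbf{H}(\w)^\top\mbf{H}(\w)\bSig)$, and I set $\xi_{T_0}=\inf_{\w\in\mathcal{H}_J}R(\w)$. The first step is algebraic: substituting $\mbf{y}_1=\bmu_1+\e_1$ into the criterion, using symmetry of $\mbf{H}(\w)$ and $\tr(\mbf{H}(\w)\bSig)=\sum_{j}w_j\sigma_j^2$, gives
\begin{align*}
\calC(\w)-\|\e_1\|^2 &= L(\w)\;-\;2\e_1^\top(\mbf{H}(\w)-\I)\bmu_1\;-\;2\bigl[\e_1^\top\mbf{H}(\w)\e_1-\tr(\mbf{H}(\w)\bSig)\bigr]\\
&\quad+\;2\sum\nolimits_{j=2}^{J+1}(\hat\sigma^2-\sigma_j^2)w_j,
\end{align*}
in which $\|\e_1\|^2$ is free of $\w$ and hence irrelevant to the minimizer. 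I denote the three $\w$-dependent stochastic terms by $A_1(\w),A_2(\w),A_3(\w)$ in order.

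By the generic argument, the conclusion follows once I establish, in probability, the uniform-negligibility statements
\begin{equation*}
\sup_{\w\in\mathcal{H}_J}\frac{|L(\w)-R(\w)|}{R(\w)}\xrightarrow{p}0,\qquad \sup_{\w\in\mathcal{H}_J}\frac{|A_k(\w)|}{R(\w)}\xrightarrow{p}0\ \ (k=1,2,3),
\end{equation*}
together with $\xi_{T_0}\to\infty$: these give $\calC(\w)-\|\e_1\|^2=L(\w)(1+o_p(1))$ and $L(\w)=R(\w)(1+o_p(1))$ uniformly, so comparing $\calC$ at $\hat\w$ and at the risk-minimizing weight yields $L(\hat\w)/\inf_{\w}L(\w)\to1$. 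The key structural simplification is that $\mbf{H}(\w)$ is \emph{linear} in $\w$ and $\mathcal{H}_J$ is the box $[0,1]^J$, so each supremum collapses to a sum over the $J$ units. For the centered quadratic term, writing $\eta_j=\e_1^\top\mbf{H}_j\e_1-\tr(\mbf{H}_j\bSig)$ gives $\sup_{\w}|A_2(\w)|\le 2\sum_j|\eta_j|$; since each $\mbf{H}_j$ is a rank-one projection ($\tr(\mbf{H}_j)=1$), condition (1) bounds $\E{\eta_j^2}$ by a constant, so $\sum_j|\eta_j|=O_p(J)$ by Chebyshev. The linear term obeys $\sup_{\w}|A_1(\w)|\le 2|\e_1^\top\bmu_1|+2\sum_j|\e_1^\top\mbf{H}_j\bmu_1|$, whose summands have second moments controlled by condition (2), making $A_1$ of no larger order than $A_2$; and $L(\w)-R(\w)$ is a sum of the same two fluctuation types and is bounded identically.

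The crux, and the step I expect to be hardest, is the plug-in term $A_3(\w)=2\sum_j(\hat\sigma^2-\sigma_j^2)w_j$, which is the only place $\hat\sigma^2=\|\mbf{y}_1-\mbf{Y}_0[\mathrm{diag}(\mbf{Y}_0^\top\mbf{Y}_0)]^{-1}\mbf{Y}_0^\top\mbf{y}_1\|^2$ enters and the only genuine departure from textbook homoskedastic MMA. Because $\hat\sigma^2$ is the residual norm of the all-units fit $\mbf{H}(\1)\mbf{y}_1$, its mean carries the full-model approximation error $\|(\I-\mbf{H}(\1))\bmu_1\|^2$ on top of a noise contribution; bounding $\sup_{\w}|A_3(\w)|\le 2\sum_j|\hat\sigma^2-\sigma_j^2|$ and dividing by $R(\w)$ therefore forces a comparison between this approximation error (scaled by $J$) and the benchmark risk $\xi_{T_0}$. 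This is precisely what condition (3), namely $J^{-1}\|\bmu_1-\mbf{Y}_0[\mathrm{diag}(\mbf{Y}_0^\top\mbf{Y}_0)]^{-1}\mbf{Y}_0^\top\bmu_1\|^2\to\infty$, is designed to supply: it forces the infeasible benchmark to diverge fast enough relative to the number of units $J$ that the $O_p(J)$ fluctuations in $A_1,A_2$ and the discrepancy in $A_3$ are all $o_p(\xi_{T_0})$. I would therefore devote most of the work to (i) computing $\E{\hat\sigma^2}$ and a concentration bound for $\hat\sigma^2-\E{\hat\sigma^2}$ under condition (1), and (ii) verifying, via conditions (2)--(3), that $\xi_{T_0}$ dominates both $J$ and the relevant scaling of $\hat\sigma^2$, so that all four ratios vanish. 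Once these uniform bounds are in hand, the optimality conclusion is immediate from the minimizer-comparison argument sketched above.
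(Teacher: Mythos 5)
Your overall architecture coincides with the paper's: both follow the Li--Hansen template, decomposing the criterion into $L(\w)$ plus $\w$-dependent stochastic remainders, showing each remainder is uniformly $o_P(R(\w))$, and driving everything with $\inf_{\w}J^{-1}R(\w)\to\infty$, which follows from condition (3) via $R(\w)\ge\|\A(\w)\bmu_1\|^2$. Your handling of the centered quadratic term (rank-one projections, $\E{\eta_j^2}=O(1)$ under condition (1), hence $\sum_j|\eta_j|=O_P(J)$) is the elementwise analogue of the paper's Loewner bound $\mbf{M}(\w)\le\mbf{H}$ combined with Whittle's inequality giving $\e^\top\mbf{H}\e=O_P(J)$, and your scrutiny of the plug-in term is, if anything, more honest than the paper's one-line claim $\hat\sigma^2=O_P(1)$ (which in the appendix silently normalizes the residual sum of squares by $T_0-J$; without that normalization the concern you raise about the approximation error entering $\E{\hat\sigma^2}$ would indeed be fatal).

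The genuine gap is in your treatment of the linear term $A_1(\w)=-2\e^\top(\mbf{H}(\w)-\I)\bmu_1$. You bound its supremum by $2|\e^\top\bmu_1|+2\sum_j|\e^\top\mbf{H}_j\bmu_1|$ and assert that condition (2) makes this of no larger order than $A_2$, i.e.\ $O_P(J)$. But $\E{(\e^\top\mbf{H}_j\bmu_1)^2}=\bmu_1^\top\mbf{H}_j\bSig\mbf{H}_j\bmu_1\le(\max_t\sigma_t^2)\,\|\mbf{H}_j\bmu_1\|^2$, and $\|\mbf{H}_j\bmu_1\|^2$ is generically of order $\|\bmu_1\|^2=O(T_0)$ --- precisely because the controls are supposed to be correlated with $\bmu_1$ --- so this route yields only $O_P(J\sqrt{T_0})$, which condition (3) (guaranteeing $R(\w)\gg J$ but saying nothing about $T_0$) does not dominate. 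The paper handles this term by a different device that you need: it isolates $\e^\top\mbf{M}(\w)\A(\w)\bmu_1$ and applies Cauchy--Schwarz so that the bias factor is absorbed into the risk, $|\e^\top\mbf{M}(\w)\A(\w)\bmu_1|\le\|\mbf{M}(\w)\e\|\,\|\A(\w)\bmu_1\|\le\|\mbf{H}\e\|\,R^{1/2}(\w)$, whence the ratio to $R(\w)$ is $O_P(J^{1/2})/R^{1/2}(\w)=o_P(1)$; the same trick disposes of the corresponding cross term in $L(\w)-R(\w)$. Without absorbing $\|\A(\w)\bmu_1\|$ into $R^{1/2}(\w)$ in this way, your uniform bounds do not close.
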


\begin{proof} 
For simplifying the notation, we assume, without loss of generality, that $\mbf{y}_j, \forall j=2,\cdots,J+1$ and $\mbf{\mu}_1$ are centered, i.e., 
$\mbf{1}^{\top}\mbf{y}_j=0$ and $\mbf{1}^{\top}\mbf{\mu}_1=0$. 

The solution $\hat{\sbf{\theta}}=(\hat{\theta}_2,\cdots,\hat{\theta}_{J+1})^{\top}$ satisfies:
\begin{align*}
\hat{\sbf{\theta}} 
=& \left[\text{diag}(\mbf{Y}_0^{\top}\mbf{Y}_0)\right]^{-1}\mbf{Y}_0^{\top}\mbf{y}_1. 
\end{align*}
Inserting $\hat{\sbf{\theta}}$ into the criterion $\mathcal{C}_c(\w)$, we have 
\begin{align}\label{Eqn:Cp}
\mathcal{C}_c(\w) &= \left\|\mbf{y}_1-\mbf{Y}_0\mbf{W}\left[\text{diag}(\mbf{Y}_0^{\top}\mbf{Y}_0)\right]^{-1}\mbf{Y}_0^{\top}\mbf{y}_1\right\|^2+2\hat{\sigma}^2\w^{\top}\mbf{1}. 
\end{align}

By the above analysis on Algorithm \ref{alg:main}, we just need to prove the asymptotic optimality by analyzing \eqref{Eqn:Cp}. 
In the theorem, we have the following conditions: 
\begin{align*}
&(C1) \quad \max\nolimits_t\E{\epsilon_t^4}\leq c_1<\infty \text{ for some constant } c_1.\notag\\
&(C2) \quad \|\bmu_1\|^2/T_0\leq c_2<\infty \text{ for some constant } c_2.\notag\\
&(C3) \quad J^{-1}\|\bmu_1-\mbf{Y}_0[\text{diag}(\mbf{Y}_0^{\top}\mbf{Y}_0)]^{-1}\mbf{Y}_0^{\top}\bmu_1\|^2\rightarrow \infty 
\text{ as }T_0\rightarrow \infty. 
\end{align*}
Let 
$\mbf{M}(\w)=\mbf{Y}_0\mbf{W}\left[\text{diag}(\mbf{Y}_0^{\top}\mbf{Y}_0)\right]^{-1}\mbf{Y}_0^{\top}$ and $\A(\w)=\I-\mbf{M}(\w)$.  
We have 
\begin{align}
 &\hat\y_1(\w)=\mbf{M}(\w)\y_1, \notag\\
 &L(\w)=\|\hat\y_1(\w)-\bmu_1\|^2=\|\mbf{M}(\w)\y_1-\bmu_1\|^2. 
 \label{eqn:L}
\end{align}
From \eqref{eqn:L}, a simple calculation follows 
\be
\label{Rw}
R(\w)=E\{L(\w)  | \mbf{Y}_0 \}= \left\|\A(\w)\bmu_1\right\|^2 + \tr\left\{\mbf{M}(\w)\bSig\mbf{M}^{\top}(\w)\right\}\geq \left\|\A(\w)\bmu_1\right\|^2.
\ee
Thus, combing \eqref{Rw} and Condition C3, we have 
\begin{align}\label{infR}
\inf_{w\in\mathcal{H}_J}J^{-1}R(\w)\rightarrow \infty. 
\end{align}

Denote $\bSig=\sigma^2\mbf{I}_{T_0}$. 
From \eqref{Eqn:Cp} and \eqref{eqn:L}, we have
\begin{align*}
\mathcal{C}_c(\w)-L(\w)
&=\|\hat\y_1(\w)-\bmu_1-\e\|^2+2\hat{\sigma}^2 \w^{\top}\mbf{1}-\left\|\mbf{M}(\w)\y_1-\bmu_1\right\|^2\notag\\
&=-2\left\{\hat\y_1(\w)-\bmu_1\right\}^\top\e+2\hat{\sigma}^2 \w^{\top}\mbf{1}+\left\|\e\right\|^2\notag\\
&=-2\e^\top\mbf{M}(\w)\y_1+2\hat{\sigma}^2 \w^{\top}\mbf{1}+2\bmu_1^\top\e+\left\|\e\right\|^2\notag\\
&=-2\e^\top\mbf{M}(\w)\bmu_1-2\e^\top\mbf{M}(\w)\e+2\hat{\sigma}^2 \w^{\top}\mbf{1}+2\bmu_1^\top\e+\left\|\e\right\|^2, 
\end{align*}
where the last three terms in the last line do not involve $\w$. 
Following the proof approach in \citet{Li:87} and \citet{Zhu:23}, if we can show that $\e^\top\mbf{M}(\w)\bmu_1$ and $\hat{\sigma}^2 \w^{\top}\mbf{1}-\e^\top\mbf{M}(\w)\e$ are negligible (compared to $L(\w)$) uniformly for 
any $\w\in\mathcal{H}_J$, then the asymptotic optimality for $\hat{\w}$ is established. 
More precisely, it remains to show that in probability, 
$$\supw
\left|\frac{\e^\top\mbf{M}(\w)\bmu_1}{R(\w)}\right| \rightarrow 0,$$
$$\supw
\left|\frac{\hat{\sigma}^2 \w^{\top}\mbf{1}-\e^\top\mbf{M}(\w)\e}{R(\w)}\right| \rightarrow 0,$$
and 
$$\supw
\left|\frac{L(\w)}{R(\w)}-1\right| \rightarrow 0.$$

From \eqref{eqn:L} and \eqref{Rw}, we have
\begin{align*}
R(\w)-L(\w)
=&\left\|\A(\w)\bmu_1\right\|^2 +  \tr\left\{\mbf{M}(\w)\bSig\mbf{M}^{\top}(\w)\right\}-
\left\|\mbf{M}(\w)\y_1-\bmu_1\right\|^2\notag\\
=&\left\|\A(\w)\bmu_1\right\|^2 +  \tr\left\{\mbf{M}(\w)\bSig\mbf{M}^{\top}(\w)\right\}-
\left\|\A(\w)\bmu_1-\mbf{M}(\w)\e\right\|^2\notag\\
=&-\left\|\mbf{M}(\w)\e\right\|^2 + \tr\left\{\mbf{M}(\w)\bSig\mbf{M}^{\top}(\w)\right\}+
2\e^\top\mbf{M}(\w)\A(\w)\bmu_1.
\end{align*}
Thus, to prove the asymptotic optimality,
We need only to verify that
$$
\begin{array}{lll}
&(\text{\romannumeral1})\
\supw
R^{-1}(\w)\left|
\e^\top\mbf{M}(\w)\bmu_1\right|=o_P(1), \\
&(\text{\romannumeral2})\  
\supw
R^{-1}(\w)[\hat{\sigma}^2 \w^{\top}\mbf{1}-\e^\top\mbf{M}(\w)\e]=o_P(1),\\
&(\text{\romannumeral3})\
\supw
R^{-1}(\w)\left\|\mbf{M}(\w)\e\right\|^2=o_P(1),\\
&(\text{\romannumeral4})\ \supw
R^{-1}(\w)\tr\left\{\mbf{M}(\w)\bSig\mbf{M}^{\top}(\w)\right\}=o_P(1),\\
&(\text{\romannumeral5})\ \supw
R^{-1}(\w)\left|\e^\top\mbf{M}(\w)\A(\w)\bmu_1\right|
=o_P(1).
\end{array}
$$
Let $\mbf{H}=\mbf{Y}_0\left[\text{diag}(\mbf{Y}_0^{\top}\mbf{Y}_0)\right]^{-1}\mbf{Y}_0^{\top}$. 
A direct simplification yields
\begin{align}
\mbf{M}(\w)&\le \mbf{H}, \label{H}
\end{align}
where the inequality is in Loewner ordering.
Given Condition C1, applying Theorem 2 of \cite{W:60} leads to that, for some constant $0<\eta<\infty$,
\begin{align*}
\E{\left(\e^\top\mbf{H}\e-\E{\e^\top\mbf{H}\e}\right)^{2}  | \mbf{Y}_0 }
\leq \eta(\tr\{\mbf{H}\bSig\mbf{H}\}). 
\end{align*}
It follows 
\begin{align}\label{L-1}
  \mbf{\epsilon}^{\top}\mbf{H}\mbf{\epsilon}
   = & \tr\{\bSig\mbf{H}\}
     +O_P\bigg\{\sqrt{\tr\{\bSig\mbf{H}^2\}}\bigg\}
  = O_p(J),
\end{align}
where the last step is from  
$\tr\{\bSig\mbf{H}\}=\tr\{\sum\nolimits_{j=2}^{J+1}w_j(\mbf{y}_j^{\top}\bSig\mbf{y}_j)(\mbf{y}_j^{\top}\mbf{y}_j)^{-1}\}
=O_p(J)$ and, similarly, $\tr\{\bSig\mbf{H}\}=O_p(J)$. 
From \eqref{infR} and \eqref{L-1}, we have
\be
\label{ePe}
\supw\frac{\e^\top\mbf{H}\e}{R(\w)}\leq \frac{O_P(J)}{R(\w)}=o_P(1).
\ee
From (\ref{infR}) and (\ref{ePe}), we have
\be
\supw\frac{
\left|
\e^\top\mbf{H}\mbf{A}(\w)\bmu_1\right|}{R(\w)}\leq
\supw\frac{\left\|\mbf{H}\e\right\|\left\|\A(\w)\bmu_1\right\|}{R(\w)}\leq
\supw\frac{\left\|\mbf{H}\e\right\|}{R^{1/2}(\w)}
=o_P(1).\n
\ee
Thus, (\romannumeral1) is proved.

From Condition C2, we have
\be
\label{hatsigma}
\hat\sigma^2 =\frac{\|\mbf{y}_1-\mbf{Y}_0(\mbf{Y}_0^{\top}\mbf{Y}_0)^{-1}\mbf{Y}_0^{\top}\mbf{y}_1\|^2}{T_0-J} = O_P(1).
\ee
It follows that 
\be
\supw\frac{
|\hat{\sigma}^2 \w^{\top}\mbf{1}-\e^\top\mbf{M}(\w)\e|
}{R(\w)}
\leq \supw\frac{\hat{\sigma}^2 \w^{\top}\mbf{1}}{R(\w)}+\supw\frac{
\e^\top\mbf{H}\e}{R(\w)}=o_P(1), \n
\ee
where the last step is from \eqref{infR}, \eqref{ePe}, and (\ref{hatsigma}). 
Thus, (\romannumeral2) is proved.

From (\ref{H}), we have
\be
\supw\frac{\|\mbf{M}(\w)\e\|^2}{R(\w)}\leq
\supw\frac{
\e^\top\mbf{H}\e}{R(\w)}
=o_P(1),\n
\ee
where the last step is from \eqref{infR}. 
Thus, (\romannumeral3) is proved.

From (\ref{H}), we have
\be\supw\frac{
\tr\left\{\mbf{M}(\w)\bSig\mbf{M}^{\top}(\w)\right\}}{R(\w)} \leq \supw\frac{\tr\{\mbf{H}\bSig\}}{R(\w)}=o_P(1), \n
\ee
where the last step is from \eqref{infR} and Condition C1. 
Thus, (\romannumeral4) is proved.

From (\ref{Rw}) and (\ref{H}), we have
\be
 \supw\frac{\left|\e^\top\mbf{M}(\w)\A(\w)\bmu_1\right|}{R(\w)}
\leq
\supw\frac{\left\|\mbf{M}(\w)\e\right\|\left\|\A(\w)\bmu_1\right\|}{R(\w)}
\leq
\supw\frac{\left\|\mbf{H}\e\right\|}{R^{1/2}(\w)}
=o_P(1), \n
\ee
where the last step is from \eqref{ePe}. It means that (\romannumeral6) is proved.
Therefore, the theorem is proved.

\end{proof}

\subsection{Proof of Theorem \ref{th:opt-out}}

Denote 
$\mbf{M}^{(o)}(\w)=\mbf{Y}_0^{(o)}\mbf{W}\left[\text{diag}(\mbf{Y}_0^{\top}\mbf{Y}_0)\right]^{-1}\mbf{Y}_0^{\top}$ and $\A^{(o)}(\w)=\I-\mbf{M}^{(o)}(\w)$.  
We have 
\begin{align}
 &\hat\y_1^{(o)}(\w)=\mbf{M}^{(o)}(\w)\y_1, \notag\\ 
 &L^{(o)}(\w)=\|\hat\y_1^{(o)}(\w)-\bmu_1^{(o)}\|^2=\|\mbf{M}^{(o)}(\w)\y_1-\bmu_1^{(o)}\|^2. \label{eqn:L-out}
\end{align}
In the theorem, we have the following conditions: 
\begin{align}
&(C4) \quad \|\bmu_1^{(o)}\|^2/(T-T_0)\leq c_3<\infty \text{ for some constant } c_3.\notag\\
&(C5) \quad J^{-1}(T-T_0)^{-1}T_0\|\bmu_1^{(o)}-\mbf{Y}_0^{(o)}[\text{diag}(\mbf{Y}_0^{\top}\mbf{Y}_0)]^{-1}\mbf{Y}_0^{\top}\bmu_1\|^2\rightarrow_p \infty 
\text{ as }T_0\rightarrow \infty. \notag\\
&(C6) \quad \text{The model \eqref{model:mu1-model} holds.} \notag
\end{align}
Similar to \eqref{Rw}, we have 
\be
\label{Rw-out}
R^{(o)}(\w)=E\{L^{(o)}(\w)  | \mbf{Y}_0 \}
\geq \left\|\A^{(o)}(\w)\bmu_1^{(o)}\right\|^2.
\ee
Thus, combing \eqref{Rw-out} and Condition C5, we have 
\begin{align}\label{infR-out}
\inf_{w\in\mathcal{H}_J}J^{-1}(T-T_0)^{-1}T_0R^{(o)}(\w)\rightarrow \infty. 
\end{align}

Define $\Delta=\left\|\mbf{M}(\w)\y_1-\bmu_1\right\|^2 - T_0(T-T_0)^{-1}\left\|\mbf{M}^{(o)}(\w)\y_1-\bmu_1^{(o)}\right\|^2$. 
From \eqref{Eqn:Cp} and \eqref{eqn:L-out}, we have
\begin{align*}
\mathcal{C}(\w)-L^{(o)}(\w)
&=\|\hat\y_1(\w)-\bmu_1-\e\|^2+2\hat{\sigma}^2 \w^{\top}\mbf{1} - T_0(T-T_0)^{-1}\left\|\mbf{M}^{(o)}(\w)\y_1-\bmu_1^{(o)}\right\|^2\notag\\
&=-2\left\{\hat\y_1(\w)-\bmu_1\right\}^\top\e+2\hat{\sigma}^2 \w^{\top}\mbf{1}+\left\|\e\right\|^2 + \Delta \notag\\
&=-2\e^\top\mbf{M}(\w)\y_1+2\hat{\sigma}^2 \w^{\top}\mbf{1}+2\bmu_1^\top\e+\left\|\e\right\|^2 + \Delta \notag\\
&=-2\e^\top\mbf{M}(\w)\bmu_1-2\e^\top\mbf{M}(\w)\e+2\hat{\sigma}^2 \w^{\top}\mbf{1}+2\bmu_1^\top\e+\left\|\e\right\|^2 +\Delta. 
\end{align*}
Similar to the proof of Theorem \ref{th:opt}, under Conditions C1 and C4, as well as \eqref{infR-out}, we have
$$
\begin{array}{lll}
&(\text{\romannumeral1})\
\supw
T_0^{-1}(T-T_0)\left|
\e^\top\mbf{M}(\w)\bmu_1\right|/R^{(o)}(\w)=o_P(1), \\
&(\text{\romannumeral2})\  
\supw
T_0^{-1}(T-T_0)[\hat{\sigma}^2 \w^{\top}\mbf{1}-\e^\top\mbf{M}(\w)\e]/R^{(o)}(\w)=o_P(1),\\
&(\text{\romannumeral3})\
\supw
T_0^{-1}(T-T_0)\left\|\mbf{M}(\w)\e\right\|^2/R^{(o)}(\w)=o_P(1),\\
&(\text{\romannumeral4})\ \supw
T_0^{-1}(T-T_0)\tr\left\{\mbf{M}(\w)\bSig\mbf{M}^{\top}(\w)\right\}/R^{(o)}(\w)=o_P(1),\\
&(\text{\romannumeral5})\ \supw
T_0^{-1}(T-T_0)\left|\e^\top\mbf{M}(\w)\A(\w)\bmu_1\right|/R^{(o)}(\w)
=o_P(1).
\end{array}
$$

Thus, we just need to verify $T_0^{-1}(T-T_0)\|\Delta\|^2/R^{(o)}(\w)=o_P(1)$. We shall prove it. 
From Condition C6, we have 
\begin{align}
\sbf{\mu}_1 = & \mbf{Y}_0\mbf{W}\sbf{\theta}^* + \mbf{e} \ \ \text{and} \ \
\sbf{\mu}_1^{(o)} =  \mbf{Y}_0^{(o)}\mbf{W}\sbf{\theta}^* + \mbf{e}^{(o)}, \label{model-out-e}
\end{align}
where $\mbf{e}=(e_1,\cdots, e_{T_0})^{\top}$ and $\mbf{e}^{(o)}=(e_{T_0+1},\cdots, e_{T})^{\top}$. 
Denote 
\begin{align*}
\sbf{\Delta}_1= &\mbf{Y}_0\mbf{W}\left[\text{diag}(\mbf{Y}_0^{\top}\mbf{Y}_0)\right]^{-1}\mbf{Y}_0^{\top}\y_1-\mbf{Y}_0\mbf{W}\sbf{\theta}^*, \notag\\ 
\sbf{\Delta}_1^{(o)}= &\mbf{Y}_0^{(o)}\mbf{W}\left[\text{diag}(\mbf{Y}_0^{\top}\mbf{Y}_0)\right]^{-1}\mbf{Y}_0^{\top}\y_1-\mbf{Y}_0^{(o)}\mbf{W}\sbf{\theta}^*.
\end{align*} 
\eqref{model-out-e} follows  
\begin{align*}
\left\|\mbf{M}(\w)\y_1-\bmu_1\right\|^2 
= & \left\| \sbf{\Delta}_1 \right\|^2 +2\sbf{e}^{\top}\sbf{\Delta}_1+\|\sbf{e}\|^2,\notag\\
\left\|\mbf{M}^{(o)}(\w)\y_1-\bmu_1^{(o)}\right\|^2 
= & \left\| \sbf{\Delta}_1^{(o)} \right\|^2 +2\sbf{e}^{(o)\top}\sbf{\Delta}_1^{(o)}+\|\sbf{e}^{(o)}\|^2.
\end{align*}
Since $\|\sbf{e}\|^2$ and $\|\sbf{e}^{(o)}\|^2$ do not rely on $\w$, in order to prove the asymptotic optimality,
it suffices to verify that 
\begin{align*}
&(\text{\romannumeral6})\
\supw
T_0^{-1}(T-T_0)\left\|\sbf{\Delta}_1\right\|^2/R^{(o)}(\w)=o_P(1), \notag\\
&(\text{\romannumeral7})\
\supw
\left\|\sbf{\Delta}_1^{(o)}\right\|^2/R^{(o)}(\w)=o_P(1). 
\end{align*}
From Condition C6, we have 
$$\left[\text{diag}(\mbf{Y}_0^{\top}\mbf{Y}_0)\right]^{-1}\mbf{Y}_0^{\top}\y_1 - \sbf{\theta}^* = O_p(T_0^{-1/2}).$$
It follows that $\left\|\sbf{\Delta}_1\right\|^2=O_P(J)$ and $T_0(T-T_0)^{-1}\left\|\sbf{\Delta}_1^{(o)}\right\|^2=O_P(J)$. 
Thus, (\romannumeral6) and (\romannumeral7) are proved.
The theorem is proved.
\hfill$\square$\\

\section*{Appendix B: A Placebo Permutation Test}
\label{sec:test}

To perform inference on the estimated causal effect, 
we apply the placebo permutation based approach test \citep{ADH:10}. 
It applies the synthetic controls estimator to each control unit by pretending this control unit is the treated one. 
If there is an actual treatment effect only in the treatment group post-intervention, then the estimated effect for the actual treatment unit should be among the most extreme. 
Algorithm \ref{alg:test} provides the pseudo-code for the placebo permutation test. 
The obtained probability $p_t$ provides the probability of observing a difference
between the observable $Y_{1t}(0)$ and the estimated counterfactual $\hat{Y}_{1t}(0)$ given all permutations
of the treatment and control units.

However, this permutation test does not account for the uncertainty associated with both the unit regressing step and the estimation of the variance of the idiosyncratic shocks. 
While the estimators of coefficients in the unit regression step and the variance estimator are expected to be consistent with their true values under under the model \eqref{model:mu1-model}, controlling the impact of these uncertainties on the inference shows promise under certain conditions. 
\cite{CWZ:21} develop a permutation inference procedure for counterfactual and synthetic controls. 
However, this procedure requires consistent estimation of the counterfactual mean proxies. 
As noted in Theorem \ref{th:opt-out}, the SRC estimator is biased. 
Therefore, applying this procedure would require a debiased version of the SRC estimator, which can be constructed using projection theory \citep{ZZ:14,VB:14,Li:20}. 
Deriving such an estimator and establishing its theoretical properties, however, is beyond the scope of this paper. 

\begin{algorithm}
\begin{algorithmic}
\item Step 1: 
\For{$j=1,\ldots,J+1$}
\State Obtain the SRC estimator $\hat{Y}_{jt}(0)$ by treating unit $j$ as the treated unit;
\State Compute the differences $d_{jt}=Y_{jt}-\hat{Y}_{jt}(0)$. 
\EndFor
\item Step 2: 
\For{$t=1,\ldots,T$} 
\State Sort $d_{jt}$ decreasingly in $j$; 
\State Compute the probability $p_t$ of obtaining a value of $d_{1t}$
as $p_t=\frac{\text{rank}(d_{1t})}{J+1}$. 
\EndFor
\end{algorithmic}
\caption{The Placebo Permutation Test}
\label{alg:test}
\end{algorithm}

\section*{Appendix C: Additional Empirical Results}
\label{sec:append-B}
\begin{table}[!ht]
\caption{Post-intervention MSPE of alternative estimators under autoregressive noise.}
\begin{center}
\begin{tabular}{c | c c c c c c c c c}
\hline
$(\text{model},\sigma)$ & \scriptsize SC & \scriptsize dSC & \scriptsize ASC & \scriptsize GSC  & \scriptsize MASC & \scriptsize OLS & \scriptsize lasso & \scriptsize SRC\\
\hline
$(\emph{F1},1)$&2.116&2.062&2.108&2.365&\bf1.915& 4.074&2.084&2.438\\
$(\emph{F1},0.5)$&0.672&0.642&0.745&0.678&\bf0.512& 1.471&0.661&0.840\\
$(\emph{F1},0.1)$&0.126&0.032&0.031&0.028&\bf0.025& 0.046&0.029&0.029\\
\hline
$(\emph{F2},1)$&6.356&6.624&6.432&7.513&5.647& 5.105&6.363&\bf3.407\\
$(\emph{F2},0.5)$&5.101&5.173&5.090&5.600&6.018& 1.390&4.995&\bf0.948\\
$(\emph{F2},0.1)$&4.229&4.232&4.233&4.385&3.401& 0.060&4.198&\bf0.042\\
\hline
$(\emph{F3},1)$&6.416&6.846&6.757&7.770&7.443&12.072&6.932&\bf6.090\\
$(\emph{F3},0.5)$&5.132&5.196&5.259&5.497&5.766& 8.392&5.338&\bf3.173\\
$(\emph{F3},0.1)$&4.589&4.737&4.614&4.909&3.830& 7.618&4.760&\bf2.589\\
\hline
\end{tabular}
\end{center}
\label{simulation-ARnoise}
\end{table}

\begin{table}[!ht]
\caption{Pre-period fit of alternative estimators in the placebo study, 
measured by the the mean squared residuals.
}
\begin{center}
\begin{tabular}{c | c c c c c c c c}
\hline
region & \scriptsize SC & \scriptsize dSC & \scriptsize ASC &  \scriptsize GSC &\scriptsize MASC &  \scriptsize OLS & \scriptsize lasso & \scriptsize SMC\\
\hline
Andalucia & 1.26e-3 & 5.01e-2 & 3.96e-3 & 2.19e-4 & 2.44e-3 & 1.76e-2 & 8.17e-2 & \bf6.83e-5\\ 
Aragon & 4.12e-4 & 1.95e-4 & 1.35e-4 & 1.72e-4 & 7.52e-4 &  \bf6.46e-5 & 1.15e-3 & 6.10e-4\\ 
Asturias & 9.15e-5 & \bf4.96e-5 & 9.76e-3 & 4.26e-4 & 1.09e-4 & 1.34e-3 & 2.42e-3 & 9.14e-5\\  
Baleares & 9.52e-2 & 3.04e-1 & 4.73e-3& \bf2.78e-3  & 9.52e-2 &  5.35e-3 & 6.2e-1 & 1.11e-2\\ 
Canarias & 1.32e-3 & 2.78e-3 & 1.16e-3 & \bf4.33e-4 & 1.32e-3 &  2.17e-3 & 3.8e-2 & 6.14e-4\\ 
Cantabria & 1.09e-3 & 4.36e-4 & 4.57e-3 & \bf1.71e-4 & 5.67e-4 &  4.52e-3 & 9.89e-4 & 6.01e-4\\ 
Leon & 4.60e-4 & 8.42e-4 & \bf8.76e-5 & 1.26e-4 & 5.79e-4 & 3.18e-4 & 6.26e-3 & 2.91e-4\\  
Mancha & 4.14e-3 & 3.72e-4 & 5.96e-4 & \bf1.79e-4 & 8.4e-3 &  1.05e-3 & 9.41e-3 & 9.98e-4\\ 
Cataluna & 1.36e-2 & \bf1.83e-4 & 8.95e-4 & 3.94e-4 & 1.36e-2 & 1.39e-3 & 1.04e-1 & 4.22e-4\\
Valenciana & 9.81e-4 & 6.96e-4 & 7.42e-4 & 9.47e-4 & 4.51e-3 & \bf1.45e-4 & 2.48e-3 & 8.16e-4\\
Extremadura & 1.15e-1 & 1.08e-2 & 1.40e-3 & \bf1.07e-3 & 1.15e-1 &  2.04e-3 & 8.53e-2 & 1.30e-2\\
Galicia & 3.01e-4 & 3.45e-4 & \bf3.07e-5 & 1.44e-4 & 3.01e-4 &  8.51e-5 & 1.22e-2 & 5.30e-4\\ 
Madrid & 7.21e-1 & 5.61e-2 & 2.25e-2 & \bf8.76e-3 & 7.21e-1 & 2.14e-2 & 1.19e-1 & 2.08e-1\\ 
Murcia & 1.37e-3 & 5.33e-4 & 1.44e-3 & \bf3.11e-4 & 2.63e-3 &  3.11e-3 & 9.07e-3 & 6.42e-3\\ 
Navarra & 2.82e-4 & 3.01e-4 & \bf6.16e-5 & 2.79e-4 & 5.09e-4 & 1.63e-4 & 1.13e-2 & 5.14e-4\\ 
Rioja & 7.17e-4 & 8.40e-4 & \bf1.30e-4 & 4.36e-4 & 3.11e-3 & 4.95e-4 & 1.38e-2 & 3.98e-4\\
\hline
average & 5.98e-2 & 2.68e-2 & 3.27e-3 & \bf1.05e-3 & 6.06e-2 & 3.83e-2 & 6.98e-2 & 1.53e-2\\
\hline
\end{tabular}
\end{center}
\label{realdata-fit}
\end{table}

\end{document}